\documentclass[12pt]{article}
\usepackage{amsmath}
\usepackage{graphicx,psfrag,epsf}
\usepackage{enumerate}
\usepackage{natbib}
\usepackage{url} 

\usepackage{comment}
\usepackage{chngcntr}
\usepackage{amsmath,commath, amsfonts,amsthm,mathtools,mathrsfs,bm}
\usepackage{thmtools}
\usepackage{enumitem}
\usepackage{float}
\usepackage{booktabs}
\usepackage{multirow}
\usepackage[flushleft]{threeparttable}
\usepackage{xcolor}
\usepackage{hyperref}
\hypersetup{
	colorlinks = true,
	linkcolor = blue,
	citecolor = blue,
}

\newtheorem{theorem}{Theorem}
\newtheorem{lemma}[theorem]{Lemma}
\newtheorem{assumption}{Assumption}
\newtheorem{corollary}[theorem]{Corollary}
\newtheorem{proposition}[theorem]{Proposition}
\declaretheorem[style=definition]{example}

\renewcommand\thmcontinues[1]{Continued}

\numberwithin{theorem}{section}
\numberwithin{proposition}{section}

\newcommand{\blind}{0}

\addtolength{\oddsidemargin}{-.5in}%
\addtolength{\evensidemargin}{-.5in}%
\addtolength{\textwidth}{1in}%
\addtolength{\textheight}{1.3in}%
\addtolength{\topmargin}{-.8in}%

\begin{document}

\def\spacingset#1{\renewcommand{\baselinestretch}%
{#1}\small\normalsize} \spacingset{1}


\if0\blind
{
  \title{\bf Efficient and Robust Estimation of the Generalized LATE Model}
  \author{Haitian Xie\thanks{
    Email: hax082@ucsd.edu. The author is grateful to his advisors Graham Elliott and Yixiao Sun, who were gracious with their advice, support and feedback. The author also thanks Wei-Lin Chen, Yu-Chang Chen, and Kaspar W\"uthrich for helpful suggestions and constructive comments. This paper was previously circulated under the title ``Generalized Local IV with Unordered Multiple Treatment Levels: Identification, Efficient Estimation, and Testable Implication.'' All remaining errors are my own.}\hspace{.2cm}\\
    Department of Economics, University of California San Diego}
  \maketitle
} \fi

\if1\blind
{
  \bigskip
  \bigskip
  \bigskip
  \begin{center}
    {\LARGE\bf Title}
\end{center}
  \medskip
} \fi

\bigskip
\begin{abstract}
	This paper studies the estimation of causal parameters in the generalized local average treatment effect (GLATE) model, a generalization of the classical LATE model encompassing multi-valued treatment and instrument. We derive the efficient influence function (EIF) and the semiparametric efficiency bound (SPEB) for two types of parameters: local average structural function (LASF) and local average structural function for the treated (LASF-T). The moment condition generated by the EIF satisfies two robustness properties: double robustness and Neyman orthogonality. Based on the robust moment condition, we propose the double/debiased machine learning (DML) estimators for LASF and LASF-T. The DML estimator is semiparametric efficient and suitable for high dimensional settings. We also propose null-restricted inference methods that are robust against weak identification issues. As an empirical application, we study the effects across different sources of health insurance by applying the developed methods to the Oregon Health Insurance Experiment.

\end{abstract}

\noindent%
{\it Keywords:} Causal Inference, Double Robustness, Efficient Influence Function, Multi-valued Treatment, Neyman Orthogonality, Oregon Health Insurance Experiment, Unordered Monotonicity, Weak Identification.
\vfill

\newpage
\spacingset{1.45} 
\section{Introduction}
\label{sec:intro}

Since the seminal works of \cite{imbens1994identification} and \cite{angrist1996identification}, the \emph{local average treatment effect} (LATE) model has become popular for causal inference in economics. Instead of imposing homogeneity of the treatment effects as in the classical instrumental variable (IV) regression model, the LATE framework allows the treatment effect to vary across individuals. Under the monotonicity condition, the average treatment effect can be identified for a subgroup of individuals whose treatment choice complies with the change in instrument levels.

The current form of the LATE model only accepts binary treatment variables. This restriction is inconvenient in many economic settings where the treatment is multi-leveled in nature. For example, parents select different preschool programs for their kids, schools assign students to different classroom sizes, families relocate to various neighborhoods in housing experiments, and people choose different sources of health insurance.
To apply the LATE model to these settings, researchers often need to redefine the treatment so that there are only two treatment levels. However, merging the treatment levels can complicate the task of program evaluation and dampen the causal interpretation of the estimates. As pointed out by \cite{kline2016evaluating}, if the original treatment levels are substitutes, then there is ambiguity regarding which causal parameters are of interest.
After merging the treatment levels, the heterogeneity in the treatment effect across different treatment levels is lost. 

This paper addresses the above issues by generalizing the LATE framework to incorporate the potential multiplicity in treatment levels directly. We call the new framework the \emph{generalized LATE} (GLATE) model. The main assumption of the GLATE model is the unordered monotonicity assumption proposed by \cite{heckman2018unordered}, which is a generalization of the monotonicity assumption in the binary LATE model.\footnote{To distinguish with the GLATE model, we sometimes use the terminology ``binary LATE model" to refer to the LATE model studied by \cite{imbens1994identification} and \cite{abadie2003semiparametric}.} 

We generalize the identification results in \cite{heckman2018unordered} to explicitly account for the presence of conditioning covariates, which is often important in practical settings. Recently, \cite{blandhol2022tsls} point out that linear TSLS, the common way to control for covariates in empirical studies, does not bear the LATE interpretation. The only specifications that have LATE interpretations are the ones that control for covariates nonparametrically. Therefore, it is essential from the causal analysis perspective to incorporate the covariates into the GLATE framework in a nonparametric way.

The causal parameters identifiable in the GLATE model include \emph{local average structural function} (LASF) and \emph{local average structural function for the treated} (LASF-T).
LASF is the mean \emph{potential outcome} for specific subpopulations. These subpopulations are defined by their treatment choice behaviors and are generalizations of the concepts \emph{always takers}, \emph{compliers}, and \emph{never takers} in the binary LATE model. The parameter LASF-T further restricts the subpopulation to exclude individuals who do not take up the treatment.

The paper is concerned with the econometric aspects of the GLATE model. The analysis begins by deriving \emph{efficient influence function} (EIF) and \emph{semiparametric efficiency bound} (SPEB) for the identified parameters. The calculation is based on the method outlined in Chapter 3 of \cite{bickel1993efficient} and \cite{newey1990semiparametric}. We then verify that the \emph{conditional expectation projection} (CEP) estimator \citep[e.g.,][]{chen2008semiparametric}, constructed directly from the identification result, achieves the SPEB and hence is semiparametric efficient. Using these results, we may efficiently estimate other important parameters of interest by the plug-in method since a standard delta-method argument preserves semiparametric efficiency.

The EIF not only facilitates the efficiency calculation but can also serve as the moment condition for estimation. This is because the EIF is mean zero by construction and is equal to the original identification result plus an adjustment term due to the presence of infinite-dimensional parameters. We show that the moment condition constructed from the EIF satisfies two related robustness properties: double robustness and Neyman orthogonality. Double robustness guarantees that the moment condition is correctly specified in a parametric setting even when some nuisance parameters are not. 

The Neyman orthogonality condition means that the moment condition is insensitive to the nuisance parameters. This condition is particularly useful when the conditioning covariates are of high dimension. To further utilize this condition, we study the \emph{double/debiased machine learning} (DML) estimator \citep{chernozhukov2018double} in the GLATE setting. Under certain conditions regarding the convergence rate of the first-step nonparametric estimators, the DML estimator is asymptotically normally uniformly over a large class of \emph{data generating processes} (DGPs). 

The weak identification issue is a practical concern of the GLATE model. This is because both the treatment and instrument are multi-valued, and hence the subpopulation on which LASF and LASF-T are defined can be small in size. To deal with this issue, we propose null-restricted test statistics in one-sided and two-sided testing problems. This procedure is the generalization of the well-known Anderson-Rubin (AR) test. We show that the proposed tests are consistent and uniformly control size across a large class of DGPs, in which the size of the subpopulation mentioned above can be arbitrarily close to zero.

The paper is organized as follows. The remaining part of this section discusses the literature. Section \ref{sec:setup} introduces the GLATE model and the nonparametric identification results. Section \ref{sec:spe1} calculates the EIF and SPEB. Section \ref{sec:robustness} discusses the robustness properties of the moment condition generated by the EIF. Section \ref{sec:weak} proposes inference procedures under weak identification issues. Section \ref{sec:empirical} presents the empirical application. Section \ref{sec:conclusion} concludes. The proofs for theoretical results in the main text are collected in Appendix \ref{app:proof}.

\subsection{Literature Review}

The GLATE model provides a way to conduct causal inference under endogeneity when the treatment is multi-valued and unordered. As mentioned above, the identification result (conditional on the covariates) is first established in \cite{heckman2018unordered} by using the unordered monotonicity condition. \cite{lee2018identifying} proposes another method of identification in a similar model of multi-valued treatment. Their method is concerned with continuous instruments, while the GLATE is framed in terms of discrete-valued instruments. When the treatment levels are ordered, \cite{angrist1995two} derives the identification and estimation results for the causal parameter, which is a weighted average of LATEs across different treatment levels.

The literature on semiparametric efficiency in program evaluation starts with the seminal work of \cite{hahn1998role}, which studies the benchmark case of estimating the \emph{average treatment effect} (ATE) under \emph{unconfoundedness}. For multi-level treatment, \cite{cattaneo2010efficient} studies the efficient estimation of causal parameters implicitly defined through over-identified non-smooth moment conditions. In the case where unconfoundedness fails and instruments are present, \cite{frolich2007nonparametric} calculates the SPEB for the LATE parameter, and \cite{hong2010semiparametric} extend to the estimation of parameters implicitly defined by moment restrictions. 
In a more general framework encompassing missing data, \cite{chen2004semiparametric} and \cite{chen2008semiparametric} studies semiparametric efficiency bounds and efficient estimation of parameters defined through overidentifying moment restrictions. However, there is currently no theoretical research on semiparametric efficient estimation in models that encompasses endogeneity and unordered multiple treatment levels. 

Several ways are available for calculating the EIF for semiparametric estimators, as illustrated by \cite{newey1990semiparametric} and \cite{ichimura2022influence}.
Semiparametric efficiency calculations can be used to construct robust (Neyman orthogonal) moment conditions. This method is illustrated in \cite{newey1994asymptotic} and \cite{chernozhukov2016locally}. Based on the Neyman orthogonality condition, \cite{chernozhukov2018double} introduces
 the DML method that suits high dimensional settings. This is because Donsker properties and stochastic equicontinuity conditions are no longer required in deriving the asymptotic distribution of the semiparametric estimator.

For testing the GLATE model, \cite{sun2020instrument} proposes a bootstrap test which is the generalization and improvement of the test studied by \cite{kitagawa2015test} in the binary LATE model.

The GLATE model has received attention in the recent empirical literature due to its ability to model multi-valued treatment. \cite{kline2016evaluating} evaluate the cost-effectiveness of Head Start, classifying Head Start and other preschool programs as different treatment levels against the control group of no preschool. \cite{galindo2020empirical} assesses the impact of different childcare choice in Colombia on children’s development. \cite{pinto2014noncompliance} studies the neighborhood effects and voucher effects in housing allocations using data from the Moving to Opportunity experiment. Our theoretical analysis of the GLATE model presents important tools for estimation and inference that can be applied to those empirical settings. 


\section{Identification in the GLATE Model} \label{sec:setup}

This section describes the \emph{generalized local average treatment effect} (GLATE) model, discusses identification of the \emph{local average structural function} (LASF) and other parameters, and introduces the notation.

\subsection{The model}

We assume a finite collection of instrument values $ \mathcal{Z} = \{z_1, \cdots, z_{N_Z}\}$ and a finite collection of treatment values $ \mathcal{T} = \{t_1, \cdots, t_{N_T}\}$, where $N_Z$ and $N_T$ are respectively the total number of instrument and treatment levels. The sets $\mathcal{T}$ and $\mathcal{Z}$ are categorical and unordered. The instrumental variable $Z$ denotes which of the $N_Z$ instrument levels is realized. The random variables $T_{z_1}, \cdots, T_{z_{N_Z}}$, each taking values in $\mathcal{T}$, denote the collection of potential treatments under each instrument status. Thus, the observed treatment level is the random variable $T = T_Z = \sum_{z \in \mathcal{Z}} \mathbf{1}\{Z=z\} T_z$. For each given treatment level $t \in \mathcal{T}$, there is a potential outcome $Y_t \in \mathcal{Y} \subset \mathbb{R}$. The observed outcome is denoted by $Y = Y_T = \sum_{t \in \mathcal{T}} \mathbf{1}\{T=t\} Y_t$. The random vector $X \in \mathcal{X} \subset \mathbb{R}^{d_X}$ contains the set of covariates. The observed data is a random sample $(Y_i,T_i,Z_i,X_i), 1 \leq i \leq n$. 

The description above establishes a random sampling model where the researcher only observes one potential outcome, the one associated with the observed treatment. This implies that the sample of $Y$, observed from an individual with treatment $T=t$, comes from the conditional distribution of $Y_t$ given $T=t$ rather than from the marginal distribution of $Y_t$. In general, this fact leads to identifications issues and presents challenges for causal inference. To overcome these problems, we impose further structures on the model. 

\begin{assumption}[Conditional Independence] \label{ass:CI} 
	$( \{Y_{t}: t \in \mathcal{T} \}, \{T_{z}: z \in \mathcal{Z} \} ) \perp Z \mid X$.
	
\end{assumption}

\begin{assumption}[Unordered Monotonicity] \label{ass:UM}
	For any $t \in \mathcal{T}, z,z' \in \mathcal{Z}$, either 
		\begin{align*}
			\mathbb{P}(\mathbf{1}\{T_{z} = t\} \geq \mathbf{1}\{T_{z'} = t\} \mid X) = 1 
		\end{align*}
		or
		\begin{align*}
			\mathbb{P}(\mathbf{1}\{T_{z} = t\} \leq \mathbf{1}\{T_{z'} = t\} \mid X) = 1.
		\end{align*}
\end{assumption}

Assumption \ref{ass:CI} and \ref{ass:UM} provide the multi-valued analog of Assumption 2.1 in \cite{abadie2003semiparametric}. Assumption \ref{ass:CI} restricts that the instrument $Z$ is independent with the potential treatments and outcomes once we condition on $X$. Assumption \ref{ass:UM} is the conditional version of the unordered monotonicity condition proposed by \cite{heckman2018unordered}. It means that when we focus on a particular treatment level $t$ and a pair $(z,z')$ of instrument values, the binary environment should satisfy the usual monotonicity constraint in the LATE model. Specifically, the unordered monotonicity condition requires that a shift in the instrument moves all agents uniformly toward or against each possible treatment value.\footnote{As pointed out by \cite{vytlacil2002independence}, the LATE monotonicity condition is a restriction across individuals on the relationship between different hypothetical treatment choices defined in terms of an instrument.} 

We define the type $S$ of an individual as the vector of the potential treatments, that is,
\begin{align*}
	S = (T_{z_1} \cdots, T_{z_{N_Z}})'.
\end{align*}
By construction, $S$ is not observed. Assumption \ref{ass:UM}, the unordered monotonicity condition, is essentially a restriction on $\mathcal{S} \equiv \textit{supp}(S)$, the support of $S$. Denote the elements in $\mathcal{S}$ by $s_1,\cdots,s_{N_S}$, where $N_S$ is the cardinality of $\mathcal{S}$. A convenient way to characterize $\mathcal{S}$ is by using the $N_Z \times N_S$ matrix $R \equiv (s_1,\cdots,s_{N_S})$. The matrix $R$ is referred to as the response matrix since it describes how each type of individuals' treatment choice responds to the instrument. 

The role of $S$ is to assist the identification of the counterfactual outcomes by dividing the population into a finite number of groups, where identification can be achieved within specific groups. Those groups are defined as follows. For $k = 0,\cdots,N_Z$, let $\Sigma_{t,k}$ be the set of types in which the treatment level $t$ appears exactly $k$ times. That is, 
\begin{align*}
	\Sigma_{t,k} \equiv \{ s \in \mathcal{S}  : \textstyle \sum_{i=1}^{N_Z} \mathbf{1}\{ s[i] = t \} = k \},
\end{align*}
where $s[i]$ denotes the $i$th element of the vector $s$.
In particular, the collection $\Sigma_{t,k}, k = 0,\cdots,N_Z$ forms a partition of $\mathcal{S}$.

For individuals with type $S$ in the same type set $\Sigma_{t,k}$, their treatment response in terms of $T=t$ is in a way homogeneous. Thus, it is easier intuitively to identify the marginal distribution of the potential outcome $Y_t$ within each $\Sigma_{t,k}$. More specifically, we define the \emph{local average structural functions} (LASF) and the \emph{local average structural functions for the treated} (LASF-T) as follows.
\begin{align*}
	\text{LASF: } \beta_{t,k} &\equiv \mathbb{E}[Y_t \mid S \in \Sigma_{t,k}], \\ 
	\text{LASF-T: } \gamma_{t,k} &\equiv \mathbb{E}[Y_t \mid S \in \Sigma_{t,k}, T=t].
\end{align*}
Before presenting the identification results for the above two classes of parameters, 
we illustrate the GLATE model in the following two examples.

\begin{example} [Binary LATE model] \label{eg:binary}
	In the binary LATE model of \cite{imbens1994identification}, there are two treatment levels $\mathcal{T} = \{0,1\}$ and two instrument levels $\mathcal{Z} = \{0,1\}$. There are three types: $\mathcal{S} = \{ s_1 = (0,0)', s_2 = (0,1)', s_3 = (1,1)' \}$, which are referred to in the literature as defiers, compliers, and always-takers, respectively. The type set $\Sigma_{1,0} = \{s_1\}$ contains the defiers, $\Sigma_{1,1} = \{s_2\}$ the compliers, and $\Sigma_{1,2} = \{s_3\}$ the always-takers. The response matrix is the following binary matrix
	\begin{align*}
		R = (s_1,s_2,s_3) = 
		\begin{pmatrix}
			0 & 0 & 1 \\
			0 & 1 & 1
		\end{pmatrix}.
	\end{align*}
	The local average treatment effect is the treatment effect for the compliers, which can be written as the difference between two LASFs:
	\begin{align*}
		\mathbb{E}[Y_1 - Y_0 \mid S = \text{compliers}] = \mathbb{E}[Y_1 - Y_0 \mid T_1 > T_0] = \beta_{1,1} - \beta_{0,1}.
	\end{align*}

\end{example}

\begin{example}[Three treatment levels and two instrument levels] \label{eg:3t2z}
	The simplest GLATE model (excluding the binary case in Example \ref{eg:binary}) has three treatment levels $\mathcal{T}=\{t_1,t_2,t_3\}$ and two instrument levels $\mathcal{Z}=\{z_1,z_2\}$. There are five types specified as the columns in the following response matrix 
	\begin{align*}
		R = (s_1,s_2,s_3,s_4,s_5) = 
		\begin{pmatrix}
			t_1 & t_2 & t_3 & t_1 & t_2 \\
			t_1 & t_2 & t_3 & t_3 & t_3
		\end{pmatrix}.
	\end{align*}
	In this example, a shift from $z_1$ to $z_2$ moves all agents uniformly toward the treatment level $t_3$. The type set $\Sigma_{t_1,2} = \{s_1\}$ contains the type that always choose the treatment $t_1$ and thus can be referred to as $t_1$-always taker. The same applies to $\Sigma_{t_2,2} = \{s_2\}$ and $\Sigma_{t_3,2} = \{s_3\}$. The type set $\Sigma_{t1,1} = \{s_4\}$ switches from $t_1$ to $t_3$ and hence can be considered as $t_1$-swticher (or $t_1$-compliter). Similarly, we can refer to $\Sigma_{t_2,1} = \{s_5\}$ as $t_2$-switcher and $\Sigma_{t_2,1} = \{s_5\}$ as $t_3$-switcher.
	This model is used in \cite{kline2016evaluating} to study the causal effect of the Head Start preschool program. The instrument indicates whether the household receives a Head Start offer, and the treatment levels are $t_1=$ Head Start, $t_2 = $ other preschool programs, and $t_3 = $ no preschool. The unordered monotonicity condition means that anyone who changes behavior as a result of the Head Start offer does so to attend Head Start.
\end{example}

\subsection{Identification Results} \label{sec:id}

We introduce some matrix notations related to the type $S$.  For each treatment level $t \in \mathcal{T}$, let $B_t$ be a binary matrix of the same dimension as the response matrix $R$ with each element of $B_t$ signifying whether the corresponding element in the response matrix is $t$. That is, $B_t[i,j]$, the $(i,j)$th element of $B_t$, is whether $T_{z_i}$ equals $t$ for the subpopulation $S=s_j$. Define $b_{t,k} \equiv \left(\mathbf{1}\{s_1 \in \Sigma_{t,k}\}, \cdots, \mathbf{1}\{s_{N_S} \in \Sigma_{t,k}\}  \right) B_t^+,$
where $B_t^+$ is the Moore-Penrose inverse of $B_t$.

For convenience, we also need some notations regarding conditional expectations. Let 
\begin{align*}
	\pi(X) \equiv ( \pi_{z_1}(X), \cdots, \pi_{z_{N_Z}}(X) )' \text{ with } \pi_z(X) \equiv P\left( Z =z \mid X \right)
\end{align*}
be the vector of functions that describes the conditional distribution of the instrument $Z$. For each treatment level $t \in \mathcal{T}$, let 
\begin{align*}
	P_{t}(X) \equiv (P_{t,z_1}(X) , \cdots, P_{t,z_{N_Z}}(X) )' \text{ with } P_t(X) \equiv P(T = t \mid Z=z,X)
\end{align*}
be the vector that describes the conditional treatment probabilities given each level of the instrument. Denote 
\begin{align*}
	Q_{t}(X) \equiv (Q_{t,z_1}(X) , \cdots, Q_{t,z_{N_Z}}(X) )' \text{ with } Q_{t,z}(X) \equiv \mathbb{E}[Y \mathbf{1}\{T=t\} \mid Z=z,X]
\end{align*}
as the vector that contains the conditional outcomes for each treatment level $t$. Notice that the functions $\pi$, $P_{t}$, and $Q_t$ are all identified.

\begin{theorem}[Identification of LASF]\label{thm:id1}
	Let Assumptions \ref{ass:CI} - \ref{ass:UM} hold. 
	Let $t \in \mathcal{T}$ and $k \in \{1,\cdots,N_Z\}$. 
	\begin{enumerate}[label = (\roman*)]
		\item The type set probability is identified by
		\begin{equation*}
			p_{t,k} \equiv \mathbb{P}(S \in \Sigma_{t,k}) =  b_{t,k} \mathbb{E} \left[ P_{t}(X)\right].
		\end{equation*}

		\item If $p_{t,k} > 0$, the LASF is identified by:
		\begin{equation*} \label{eqn:id_lasf}
			 \beta_{t,k} = b_{t,k} \mathbb{E} \left[  Q_{t}(X) \right] / p_{t,k}.
		\end{equation*}		
	\end{enumerate}
\end{theorem}

Theorem \ref{thm:id1} identifies $p_{t,k}$, the size of the subpopulation $\Sigma_{t,k}$, and the local structural function for that subpopulation. The only exception when the identification fails is when the type set $\Sigma_{t,0}$, in which case the individual never chooses the treatment $t$.
This identification result is a modification of Theorem T-6 in \cite{heckman2018unordered} that explicitly accounts for the presence of covariates $X$. Bayes rule is applied to convert the conditional result into the unconditional one. The following theorem presents the identification result for the LASF-T.



Let $\mathcal{Z}_{t,k} \subset \mathcal{Z}$ be the set of instrument values that induces the treatment level $t$ in the type set $\Sigma_{t,k}$. That is, $\mathcal{Z}_{t,k} \equiv \left\{ z_i \in \mathcal{Z} : s[i] = t \text{, for all } s \in \Sigma_{t,k} \right\}$, where $s[i]$ denotes the $i$th element of the vector $s$. Then define $\pi_{t,k} \equiv \sum_{z \in \mathcal{Z}_{t,k}} \pi_z$ as the total probability of those instrument values.

\begin{theorem}[Identification of LASF-T]\label{thm:id2}
	Let Assumptions \ref{ass:CI} - \ref{ass:UM} hold. 
	Let $t \in \mathcal{T}$ and $k \in \{1,\cdots,N_Z\}$. Then $\mathcal{Z}_{t,k}$ is nonempty.
	\begin{enumerate}[label = (\roman*)]
		\item The treatment probability within the type set is identified by
		\begin{equation*}
			q_{t,k} \equiv P\left( T = t, S \in \Sigma_{t,k}  \right) = b_{t,k} \mathbb{E} \left[   P_{t}(X) \pi_{t,k}(X)\right].
		\end{equation*}
		\item If $q_{t,k} >0 $, then the LASF-T is identified by
		\begin{equation} \label{eqn:id_lasft}
		\gamma_{t,k} = b_{t,k} \mathbb{E} \left[  Q_t(X) \pi_{t,k}(X) \right] / q_{t,k}.
	\end{equation}
\end{enumerate}
\end{theorem}

The identification results are illustrated using the two examples.

\begin{example}[continues = eg:binary] 
	Since the treatment is binary, the matrix $B_1$ is equal to the response matrix $R$. The matrix $B_1$ and its generalized inverse $B_1^+$ are respectively
	\begin{align*}
		B_1 = 
		\begin{pmatrix}
			0 & 0 & 1 \\
			0 & 1 & 1
		\end{pmatrix},
		\text{ and } 
		(B_1^+)' = 
		\begin{pmatrix}
			0 & -1 & 1 \\
			0 & 1 & 0
		\end{pmatrix}.
	\end{align*}
	The matrix $B_0$ and its generalized inverse $B_0^+$ are respectively
	\begin{align*}
		B_0 = 
		\begin{pmatrix}
			1 & 1 & 0 \\
			1 & 0 & 0
		\end{pmatrix},
		\text{ and } 
		(B_0^+)' = 
		\begin{pmatrix}
			0 & 1 & 0 \\
			1 & -1 & 0
		\end{pmatrix}.
	\end{align*}
	The vectors $b_{1,1}$ and $b_{0,1}$ are respectively 
	\begin{align*}
		b_{1,1} = (-1,1), \text{ and } b_{0,1} = (1,-1).
	\end{align*}
	Theorem \ref{thm:id1} implies that 
	\begin{align*}
		\beta_{1,1} = \frac{\mathbb{E}[Q_{1,1}(X)] - \mathbb{E}[Q_{1,0}(X)]}{\mathbb{E}[P_{1,1}(X)] - \mathbb{E}[P_{1,0}(X)]}, \text{ and } \beta_{0,1} = \frac{\mathbb{E}[Q_{0,0}(X)] - \mathbb{E}[Q_{0,1}(X)]}{\mathbb{E}[P_{0,0}(X)] - \mathbb{E}[P_{0,1}(X)]}.
	\end{align*}
	The two denominators in the above expressions are both equal to the type probability of compliers. Then the usual identification of the LATE parameter \citep[e.g.,][]{frolich2007nonparametric} follows:
	\begin{align*}
		\mathbb{E}[Y_1 - Y_0 \mid T_1 > T_0] = \frac{\int (\mathbb{E}[Y \mid Z=1,X=x] - \mathbb{E}[Y \mid Z=0,X=x]) f_X(x)dx}{\int (\mathbb{E}[T \mid Z=1,X=x] - \mathbb{E}[T \mid Z=0,X=x]) f_X(x)dx},
	\end{align*}
	where $f_X$ denotes the marginal density function of $X$.
\end{example}

\begin{example}[continues = eg:3t2z] 
	Recall that $\Sigma_{t_1,1} = \{s_4\}$ contains the $t_1$-switcher. By Theorem \ref{thm:id1}, the LASF for the treatment level $t$ and the subpopulation $S = s_4$ is identified by\footnote{The calculation of $b_{t,k}$ is omitted for brevity, but it can be done in the same way as Example \ref{eg:binary}.}
	\begin{align*}
		p_{t_1,1} &= \mathbb{E}[P_{t_1,z_1}(X)] - \mathbb{E}[P_{t_1,z_2}(X)], \\
		\beta_{t_1,1} &= \frac{\mathbb{E}[Q_{t_1,z_1}(X)] - \mathbb{E}[Q_{t_1,z_2}(X)]}{\mathbb{E}[P_{t_1,z_1}(X)] - \mathbb{E}[P_{t_1,z_2}(X)]}.
	\end{align*}
	Notice that $\mathcal{Z}_{t_1,1} = \{z_1\}$. Then by Theorem \ref{thm:id2} we have
	\begin{align*}
		q_{t_1,1} &= \mathbb{E}[(P_{t_1,z_1}(X) - P_{t_1,z_2}(X))\pi_{z_1}(X)], \\
		\gamma_{t_1,1} &= \frac{\mathbb{E}[(Q_{t_1,z_1}(X) - Q_{t_1,z_2}(X))\pi_{z_1}(X)]}{\mathbb{E}[(P_{t_1,z_1}(X) - P_{t_1,z_2}(X))\pi_{z_1}(X)]}.
	\end{align*}
\end{example}

\section{Semiparametric Efficiency} \label{sec:spe1}

In this section, we calculate the \emph{semiparametric efficiency bound} (SPEB) and propose estimators that achieve such bounds. We focus on the parameters LASF and LASF-T. In Appendix \ref{sec:spe2}, we study general parameters implicitly defined through moment restrictions.

\subsection{LASF and LASF-T}


For the rest of the paper, we assume that $Y_t, t \in \mathcal{T}$ have finite second moments. This is necessary since we are studying efficiency. 
Let $\iota$ denote the column vector of ones and $\zeta(Z,X,\pi)$ the diagonal matrix with the diagonal elements being $\mathbf{1}\{Z=z\}/\pi_{z}(X), z \in \mathcal{Z}$.
The following theorem gives the \emph{efficient influence function} (EIF) and the SPEB for the parameters identified in the preceding section. 

\begin{theorem} [SPEB for LASF and LASF-T] \label{thm:speb1}
	Let Assumptions \ref{ass:CI} - \ref{ass:UM} hold. 
	Let $t \in \mathcal{T}$ and $k \in \{1,\cdots,N_Z\}$.
	Assume that $p_{t,k}, q_{t,k} > 0$.
	\begin{enumerate}[label = (\roman*)]
		\item The semiparametric efficiency bound for $\beta_{t,k}$ is given by the variance of the efficient influence function
		\begin{align} \label{eqn:Psi_beta}
		\begin{split}
			&\psi^{\beta_{t,k}}(Y,T,Z,X,\beta_{t,k},p_{t,k},Q_t,P_{t},\pi)\\
			= & \frac{1}{p_{t,k}} b_{t,k} \left( \zeta(Z,X, \pi) \left( \iota (Y \mathbf{1}\{T=t\}) - Q_t(X) \right) + Q_t(X) \right) \\
			-& \frac{\beta_{t,k}}{p_{t,k}} b_{t,k} \left( \zeta(Z,X, \pi) \left( \iota \mathbf{1}\{T=t\} - P_t(X) \right) + P_t(X) \right).
		\end{split}
		\end{align}

		\item The semiparametric efficiency bound for $\gamma_{t,k}$ is given by the variance of the efficient influence function
		\begin{align*}
			\begin{split}
				&\psi^{\gamma_{t,k}}(Y,T,Z,X,\gamma_{t,k},q_{t,k},Q_t,P_t,\pi) \\
				=& \frac{1}{q_{t,k}} b_{t,k} \left( \zeta(Z,X, \pi) \left( \iota (Y \mathbf{1}\{T=t\}) - Q_t(X) \right)\pi_{{t,k}}(X) + Q_t(X)\mathbf{1}\{Z \in \mathcal{Z}_{t,k}\} \right) \\
				-& \frac{\gamma_{t,k}}{q_{t,k}} b_{t,k} \left( \zeta(Z,X, \pi) \left( \iota \mathbf{1}\{T=t\} - P_t(X) \right)\pi_{{t,k}}(X) + P_t(X)\mathbf{1}\{Z \in \mathcal{Z}_{t,k}\} \right).
			\end{split}
			\end{align*}

		\item The semiparametric efficiency bound for $p_{t,k}$ is given by the variance of the efficient influence function
		\begin{align*}
			\begin{split}
				\psi^{p_{t,k}}(T,Z,X,p_{t,k},P_t,\pi) = b_{t,k} \left( \zeta(Z,X, \pi) \left( \iota \mathbf{1}\{T=t\} - P_t(X) \right) + P_t(X)   \right)- p_{t,k}.
			\end{split}
		\end{align*}

		\item The semiparametric efficiency bound for $q_{t,k}$ is given by the variance of the efficient influence function
		\begin{align*}
			\begin{split}
				&\psi^{q_{t,k}}(T,Z,X,q_{t,k},P_t,\pi)\\
				=& b_{t,k} \left( \zeta(Z,X, \pi) \left( \iota \mathbf{1}\{T=t\} - P_t(X) \right)\pi_{{t,k}}(X) + P_t(X)\mathbf{1}\{Z \in \mathcal{Z}_{t,k}\} \right) - q_{t,k}.
			\end{split}
		\end{align*}
	\end{enumerate}
\end{theorem}

The EIF in Theorem \ref{thm:speb1} can be interpreted as the moment condition from the identification results modified by an adjustment term due to the presence of unknown infinite-dimensional parameters. Take $\psi^{\beta_{t,k}}$ as an example, the terms 
\begin{align*}
	b_{t,k} \left( \zeta(Z,X, \pi) \left( \iota (Y \mathbf{1}\{T=t\}) - Q_t(X) \right) \right) / p_{t,k}
\end{align*}
and 
\begin{align*}
	\beta_{t,k} b_{t,k} \left( \zeta(Z,X, \pi) \left( \iota \mathbf{1}\{T=t\} - P_t(X) \right) \right) / p_{t,k}
\end{align*}
are respectively the adjustment terms due to the presence of $Q_t$ and $P_t$. 

From the expression of $\psi^{\beta_{t,k}}$, we can see that the SPEB would be large when $p_{t,k}$ is small. This is because $p_{t,k}$ measures the size of the subpopulation $S \in \Sigma_{t,k}$ on which the LASF is estimated. When $p_{t,k}$ is small, we run into the weak identification issue. In Section \ref{sec:weak}, we study inference procedures that are robust against weak identification issues.

One benefit of the EIFs is that we can easily calculate the covariance matrix of different estimators. Consider an example where we are interested in two LASFs $\beta_1$ and $\beta_2$, whose EIF is given by $\psi_1$ and $\psi_2$, respectively. If the two estimators $\hat{\beta}_1$ and $\hat{\beta}_2$ are both semiparametric efficient, then their covariance matrix equals $\mathbb{E}[\psi_1\psi_2']$.

\begin{example}[continues = eg:binary]
	In the binary LATE model, the first two parts of Theorem \ref{thm:speb1} reduce to Theorem 2 of \cite{hong2010semiparametric}. If we assume unconfoundedness by having $T = Z$, then the result further reduces to Theorem 1 of \cite{hahn1998role}.
\end{example}

The derived SPEB helps determine whether an estimation procedure is efficient. In this section, we focus on the \emph{condition expectation projection} (CEP) estimator.\footnote{The terminology ``condition expectation projection'' is adopted from the papers \cite{chen2008semiparametric} and \cite{hong2010semiparametric}, whereas \cite{hahn1998role} refers to these estimators as ``nonparametric imputation based estimators.''} Define
\begin{align*}
	h_{Y,t,z}(X) = \mathbb{E} \left[ \mathbf{1}\{Z = z\}  Y \mathbf{1}\{T=t\} \mid X \right] \text{ and } h_{t,z}(X) = \mathbb{E} \left[ \mathbf{1}\{Z = z\}  \mathbf{1}\{T=t\} \mid X \right].
\end{align*}
The CEP procedure first estimates $\pi_z$, $h_{Y,t,z}$, and $h_{t,z}$ by using nonparametric estimators $\hat{\pi}_z$, $\hat{h}_{Y,t,z}$, and $\hat{h}_{t,z}$ respectively. These estimators can be constructed based on series or local polynomial estimation. Then $Q_{t,z}$ and $P_{t,z}$ are estimated using $\hat{Q}_{t,z} = \hat{h}_{Y,t,z} / \hat{\pi}_z$ and $\hat{P}_{t,z} = \hat{h}_{t,z} / \hat{\pi}_z$. The vectors of estimators $\hat{Q}_{t}$ and $\hat{P}_{t}$, $\hat{\pi}$ are stacked in an obvious way. Let $\hat{\pi}_{{t,k}} = \sum_{z \in \mathcal{Z}_{t,k}}\hat{\pi}_{z}$. The CEP estimators for the structural parameters are defined by
\begin{align*}
	\hat{p}_{t,k} & = \frac{1}{n}\sum_{i=1}^n b_{t,k} \hat{P}_{t}(X_i), && \hat{q}_{t,k} = \frac{1}{n}\sum_{i=1}^n b_{t,k} \hat{P}_{t}(X_i) \hat{\pi}_{{t,k}}(X_i),  \\
	\hat{\beta}_{t,k} & = \frac{1}{\hat{p}_{t,k}} \frac{1}{n} \sum_{i=1}^n b_{t,k} \hat{Q}_{t}(X_i), && \hat{\gamma}_{t,k} = \frac{1}{\hat{q}_{t,k}}\frac{1}{n}  \sum_{i=1}^n b_{t,k} \hat{Q}_{t}(X_i)  \hat{\pi}_{{t,k}}(X_i).
\end{align*}


The next proposition shows that the CEP estimators are semiparametrically efficient. The result is similar in style to \citeauthor{hahn1998role}'s (\citeyear{hahn1998role}) Proposition 4 that the low-level regularity conditions are omitted. Instead, the proposition assumes the high-level condition that the CEP estimators are asymptotically linear, which means they are asymptotically equivalent to sample averages. More formally, an estimator $\hat{\beta}$ of $\beta$ is asymptotically linear if it admits an influence function. That is, there exists an iid sequence $\psi_i$ with zero mean and finite variance such that
\begin{align*}
	\sqrt{n}(\hat{\beta} - \beta) = \frac{1}{\sqrt{n}}\sum_{i=1}^n \psi_i + o_p(1).
\end{align*}
Since each element of the conditional expectations $h_{Y,t,z}$, $h_{t,z}$, and $\pi_z$ can be considered as coming from a binary LATE model, the regularity conditions in \cite{hong2010supplement} should work with little modification.

\begin{proposition} \label{prop:cep}
	Suppose the CEP estimators are asymptotically linear, then they achieve the semiparametric efficiency bound.
\end{proposition}

The reason that this type of estimator is efficient is well explained in \cite{ackerberg2014asymptotic}. The estimation problem here falls into their general semiparametric model, where the finite-dimensional parameter of interest is defined by unconditional moment restrictions. They show that the semiparametric two-step optimally weighted GMM estimators, the CEP estimators in this case, achieve the efficiency bound since the parameters of interest are exactly identified. Discussions related to this phenomenon can also be found in \cite{chen2018overidentification}.

We next examine the efficient estimation of other policy-relevant parameters that can be derived from the parameters $\left( \beta_{t,k},\gamma_{t,k},p_{t,k},q_{t,k} \right) $. As an example, consider the type set $\Sigma_t \equiv \cup_{k=1}^{N_{Z-1}} \Sigma_{t,k}$, which is referred to as $t$-switchers. This subpopulation contains individuals who switch between $t$ and other treatments when given different levels of instruments. It is a generalization of the concept of compliers in the binary LATE framework.\footnote{Recall that switchers are also illustrated in Example \ref{eg:3t2z}.}  The LASF for the subpopulation $\Sigma_t$ is given by
\begin{align*}
	\beta_t \equiv \mathbb{E} \left[ Y_t \mid S \in \Sigma_t \right] = \frac{\sum_{k=1}^{N_Z-1} \beta_{t,k} p_{t,k}}{ \sum_{k=1}^{N_Z-1} p_{t,k}  }.
\end{align*}
Similarly, one can also define
\begin{equation}
	\gamma_t = \mathbb{E} \left[ Y_t \mid T=t, S \in \Sigma_t \right] = \frac{\sum_{k=1}^{N_Z-1} \gamma_{t,k} p_{t,k}}{ \sum_{k=1}^{N_Z-1} p_{t,k}  },
\end{equation}
which represents the LASF-T for the subpopulation of $t$-treated $t$-switchers.

For some subpopulations, a treatment effect can be identified. This point is already illustrated with Example \ref{eg:3t2z} in the discussion of the identification of the usual LATE parameter. We further illustrate this point with Example \ref{eg:3t2z}. 
\begin{example}[continues = eg:3t2z]
	The quantity
\begin{align*} \label{eqn:late}
\beta_{t_3,1} - \frac{\beta_{t_1,1} p_{t_1,1} + \beta_{t_2,1} p_{t_2,1}}{p_{t_1,1} + p_{t_2,1}} 
\end{align*}
represents the local average treatment effect of $t_3$ against other treatments within the subpopulation of $t_3$-switchers. Analogously, the parameter
\begin{align*}
\gamma_{t_3,1} - \frac{\gamma_{t_3,t_1,1} q_{t_3,t_1,1} + \gamma_{t_3,t_2,1} q_{t_3,t_2,1}}{q_{t_3,t_1,1} + q_{t_3,t_2,1}} 
\end{align*}
is the local average treatment effect of $t_3$ against other treatments within the subpopulation of $t_3$-treated $t_3$-switchers.
\end{example}

To summarize the above examples using a general expression, let $\phi = \phi(\underline{p},\underline{q},\underline{\beta},\underline{\gamma})$ be a finite-dimensional parameter, where $\phi(\cdot)$ is a known continuously differentiable function, and $\underline{p}$ is the vector containing all identifiable $p_{t,k}$'s, that is, $\underline{p} \equiv \{p_{t,k} : t \in \mathcal{T}, 1 \leq k \leq N_Z\}$. Let $\underline{q},\underline{\beta}$, and $\underline{\gamma}$ be defined analogously. A natural estimator can be defined through the CEP estimates, $\phi(\hat{\underline{p}},\hat{\underline{q}},\hat{\underline{\beta}},\hat{\underline{\gamma}})$. The delta method can help calculate the efficiency bound of $\phi$ and show the efficiency of $\phi(\hat{\underline{p}},\hat{\underline{q}},\hat{\underline{\beta}},\hat{\underline{\gamma}})$. In fact, by Theorem 25.47 of \cite{van1998asymptotic}, we immediately have the following corollary, which shows that plug-in estimators are efficient.

\begin{corollary} \label{co:1}
	The semiparametric efficiency bound of $\phi$ is given by the variance of efficient influence function
	\begin{equation}
		\psi^\phi = \sum_{p \in \underline{p}} \frac{\partial \phi}{\partial p} \psi^p + \sum_{q \in \underline{q}} \frac{\partial \phi}{\partial q} \psi^q + \sum_{\beta \in \underline{\beta}} \frac{\partial \phi}{\partial \beta} \psi^\beta + \sum_{\gamma \in \underline{\gamma}} \frac{\partial \phi}{\partial \gamma} \psi^\gamma
	\end{equation}
	where the partial derivatives are evaluated at the true parameter value. Moreover, the plug-in estimator $\phi(\hat{\underline{p}},\hat{\underline{q}},\hat{\underline{\beta}},\hat{\underline{\gamma}})$, based on the CEP estimators $\hat{\underline{p}},\hat{\underline{q}},\hat{\underline{\beta}},\hat{\underline{\gamma}}$, achieves the efficiency bound.
\end{corollary}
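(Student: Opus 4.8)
The plan is to read Corollary \ref{co:1} as a direct consequence of the functional delta method applied to the jointly efficient CEP estimators, exactly as the sentence preceding it advertises. First I would stack all the identified component parameters into a single finite-dimensional vector $\theta = (\underline{p},\underline{q},\underline{\beta},\underline{\gamma})'$ with true value $\theta^o$. By Theorem \ref{thm:speb1} each coordinate of $\theta$ admits an efficient influence function, and, as noted in the discussion of optimal joint inference preceding the corollary, the efficient influence function of the vector $\theta$ is simply the stacked vector $\Psi_\theta = (\Psi_p,\ldots,\Psi_q,\ldots,\Psi_\beta,\ldots,\Psi_\gamma)'$, with semiparametric efficiency bound $\mathbb{E}[\Psi_\theta\Psi_\theta']$.

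The key step is to upgrade the coordinatewise efficiency asserted in Theorem \ref{thm:est1} to a joint asymptotically linear representation. Because the efficient influence function of each coordinate is unique, the efficiency of each $\hat{\kappa}_j$ established in Theorem \ref{thm:est1} forces its asymptotically linear representation to be $\hat{\kappa}_j - \kappa_j^o = \frac{1}{n}\sum_{i=1}^n \Psi_{\kappa_j,i} + o_p(n^{-1/2})$. Stacking these representations gives $\sqrt{n}(\hat{\theta}-\theta^o) = \frac{1}{\sqrt{n}}\sum_{i=1}^n \Psi_{\theta,i} + o_p(1)$, so that $\hat{\theta}$ is a regular, jointly efficient estimator of $\theta$ with limiting law $N(0,\mathbb{E}[\Psi_\theta\Psi_\theta'])$. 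This bundling of the marginal statements into a joint one is what makes the delta method applicable.

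With joint efficiency in hand, I would invoke Theorem 25.47 of \cite{vaart_1998}: since $\phi$ is continuously differentiable at $\theta^o$, the plug-in $\phi(\hat{\theta})$ is again regular and efficient for $\phi(\theta^o)$, with efficient influence function obtained by the chain rule,
\begin{equation*}
	\Psi_\phi = \nabla\phi(\theta^o)'\,\Psi_\theta = \sum_{p \in \underline{p}} \frac{\partial\phi}{\partial p}\Psi_p + \sum_{q \in \underline{q}} \frac{\partial\phi}{\partial q}\Psi_q + \sum_{\beta \in \underline{\beta}} \frac{\partial\phi}{\partial\beta}\Psi_\beta + \sum_{\gamma \in \underline{\gamma}} \frac{\partial\phi}{\partial\gamma}\Psi_\gamma,
\end{equation*}
where the derivatives are evaluated at the truth. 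The efficiency bound is then $\mathbb{E}[\Psi_\phi\Psi_\phi'] = \nabla\phi(\theta^o)'\,\mathbb{E}[\Psi_\theta\Psi_\theta']\,\nabla\phi(\theta^o)$, which is precisely the variance of $\Psi_\phi$, and the asymptotic linear representation of $\phi(\hat{\theta})$ with influence function $\Psi_\phi$ delivers both asymptotic normality and attainment of the bound.

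The main obstacle I anticipate is checking that the differentiability hypothesis is genuinely met for the motivating examples rather than vacuous: ratios such as $\beta_t = (\sum_k \beta_{t,k} p_{t,k})/(\sum_k p_{t,k})$ are continuously differentiable only where the denominator is bounded away from zero, which must be secured through Assumption \ref{ass:1}\ref{GLIV:fs} guaranteeing that the relevant type probabilities are strictly positive. Once this is confirmed the remainder is mechanical; the uniqueness-of-influence-function argument in the second step is where the real content lies, since it is what licenses treating the separately efficient CEP estimators as a single efficient estimator of the stacked parameter to which the delta method can be applied verbatim.
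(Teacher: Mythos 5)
Your proposal is correct and follows essentially the same route the paper takes: the paper declares the corollary ``immediate'' from Theorem 25.47 of \cite{vaart_1998} together with Theorems \ref{thm:speb1} and \ref{thm:est1}, i.e.\ exactly the delta method for efficient estimators applied to the stacked CEP vector whose joint efficient influence function is the collection of coordinatewise ones, as discussed in the paragraph on optimal joint inference. Your additional remarks --- that coordinatewise efficiency plus uniqueness of the efficient influence function yields the joint asymptotically linear representation, and that differentiability of $\phi$ requires the relevant denominators (type probabilities) to be bounded away from zero via Assumption \ref{ass:1}\ref{GLIV:fs} --- are sensible elaborations of steps the paper leaves implicit, not a departure from its argument.
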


\section{Robustness} \label{sec:robustness}

In the previous section, the EIF is used as a tool for computing the SPEB. In this section, we directly use the EIF as the moment condition for estimation. These moment conditions are appealing because they satisfy double robustness and local robustness --- the two topics of this section. 

A word on notation: in the rest of the paper, we use a superscript $o$ to signify the true value whenever necessary. For example, when both $\pi^o$ and $\pi$ appear, the former means the true probability while the latter denotes a generic function.

\subsection{Double Robustness}

We focus on the LASF $\beta_{t,k}$. The same analysis can be applied to the other parameters. To avoid notational burden in the main text, we drop the subscript $(t,k)$ in $\beta_{t,k}$, $p_{t,k}$, and $b_{t,k}$, and the subscript $t$ in $P_t$ and $Q_t$.\footnote{The full subscripts are kept in the Appendices.}
It is straightforward to verify that the EIF $\psi^{\beta}$ has zero mean. However, we do not want to use $\psi^{\beta}$ itself as the estimating equation since it contains $1/p$ as a factor. To deal with this problem, we simply multiply $\psi^{\beta}$ by $p$ and define 
\begin{align*}
	\psi(Y,T,Z,X,\beta,Q,P,\pi) & = p \psi^{\beta}(Y,T,Z,X,\beta,p,Q,P,\pi) \\
	& = b \left( \zeta(Z,X, \pi) \left( \iota (Y \mathbf{1}\{T=t\}) - Q(X) \right) + Q(X) \right) \\
	& \quad -\beta b \left( \zeta(Z,X, \pi) \left( \iota \mathbf{1}\{T=t\} - P(X) \right) + P(X) \right).
\end{align*}
The corresponding moment condition is 
\begin{align} \label{eqn:robust-moment}
	\mathbb{E}\left[ \psi(Y,T,Z,X,\beta^o,Q^o,P^o,\pi^o) \right]=0 .
\end{align}
This moment condition is doubly robust, as demonstrated in the following proposition.


\begin{proposition} [Double Robustness] \label{prop:DR}
	Let $\left( Q,P,\pi \right)$ be an arbitrary vector of functions and $( Q^o,P^o,\pi^o ) $ the true vector of conditional expectations. Then 
	\begin{align*}
		\mathbb{E}\left[ \psi(Y,T,Z,X,\beta^o,Q^o,P^o,\pi) \right]=0 
	\end{align*}
	and
	\begin{align*}
		\mathbb{E}\left[ \psi(Y,T,Z,X,\beta^o,Q,P,\pi^o) \right]=0.
	\end{align*}
\end{proposition}

The above proposition divides the nonparametric nuisance parameters into two groups, $\pi$ and $(Q,P)$. The doubly robust moment condition is valid if either of these two groups of nuisance parameters is true. On the other hand, if the researcher uses parametric models for these nuisance parameters, then the structural parameter $\beta$ can be recovered provided that at least one of the working nuisance models is correctly specified. Therefore, the doubly robust moment condition is ``less demanding" on the researcher's ability to devise a correctly specified model for the nuisance parameters. The double robustness result in Proposition \ref{prop:DR} can be seen as the GLATE extension of the existing results in the binary LATE literature \citep[e.g.,][]{tan2006regression,okui2012doubly}.

\subsection{Neyman Orthogonality}

The second robustness property is Neyman orthogonality. Moment conditions with this property have reduced sensitivity with respect to the nuisance parameters. Formally, Neyman orthogonality means that the moment condition has zero Gateaux derivative with respect to the nuisance parameters. The result is presented in the following proposition.

\begin{proposition} [Neyman Orthogonality] \label{prop:Neyman}
	Let $\left( Q,P,\pi \right) $ be an arbitrary set of functions. For $r \in [0,1)$, define $Q^r = Q^o + r(Q - Q^o),$ $P^r = P^o + r(P - P^o),$ and $\pi^r = \pi^o + r(\pi - \pi^o)$. Suppose that $\sup_{r \in [0,1]}\big|\frac{\partial}{\partial r} \psi(Y,T,Z,X,\beta,Q^r,P^r,\pi^r) \big|$ is integrable, then
	\begin{align*}
		\frac{\partial}{\partial r} \mathbb{E} \left[ \psi(Y,T,Z,X,\beta,Q^r,P^r,\pi^r) \right] \Big|_{r=0} = 0,
	\end{align*}
	where $\beta$ does not need to be the true parameter value.
\end{proposition}

In many econometrics models, double robustness and Neyman orthogonality come in pairs. Discussions about their general relationships can be found in \cite{chernozhukov2016locally}. In practice, double robustness is often used for parametric estimation, as previously explained, whereas Neyman orthogonality is used in estimation with the presence of possibly high-dimensional nuisance parameters. 

Next, we apply the double/debiased machine learning (DML) method developed by \cite{chernozhukov2018double} to the moment condition (\ref{eqn:robust-moment}). This estimation method works even when the nuisance parameter space is complex enough that the traditional assumptions, e.g., Donsker properties, are no longer valid.\footnote{In two-step semiparametric estimations, Donsker properties are usually required so that a suitable stochastic equicontinuity condition is satisfied. See, for example, Assumption 2.5 in \cite{chen2003estimation}.} The implementation details are explained below.

The nuisance parameters $Q$, $P$, and $\pi$ are estimated using a cross-fitting method: Take an $L$-fold random partition of the data such that the size of each fold is $n/L$. For $l = 1, \cdots, L$, let $I_l$ denote the set of observation indices in the $l$th fold and $I^c_l = \bigcup_{l' \ne l} I_{l'}$ the set of observation indices not in the $l$th fold. Define $\check{Q}^l$, $\check{P}^l$, and $\check{\pi}^l$ to be the estimates constructed by using data from $I_l^c$. The DML estimator of $\beta$ is constructed following the moment condition (\ref{eqn:robust-moment}):\footnote{This is the DML2 estimator defined in \cite{chernozhukov2018double}. Another estimator, the DML1 estimator, is proposed in the same paper. We do not study the DML1 estimator since it is asymptotically equivalent to DML2, and the authors generally recommend DML2.}
\begin{align} \label{eqn:dml-beta}
	\check{\beta} = \frac{ \sum_{l=1}^L \sum_{i \in I_l}b \big( \zeta(Z_i,X_i, \check{\pi}^l) \big( \iota (Y_i \mathbf{1}\{T_i=t\}) - \check{Q}^l(X_i) \big) + \check{Q}^l(X_i) \big)}{ \sum_{l=1}^L \sum_{i \in I_l}b \big( \zeta(Z_i,X_i,\check{\pi}^l) \big( \iota \mathbf{1}\{T_i=t\} - \check{P}^l(X_i) \big) + \check{P}^l(X_i) \big)}.
\end{align}

To conduct inference, we also need an estimate for the asymptotic variance of $\check{\beta}$, which we denote by $\sigma^2$. The asymptotic variance equals to the expectation of the squared efficient influence function: $\sigma^2 = \mathbb{E}\left[ \psi^{\beta} \right]^2 = \mathbb{E}[\psi^2]/p^2$. We first estimate $p$ by using the cross-fitting method, which is essentially given by the denominator of (\ref{eqn:dml-beta}):
\begin{align} \label{eqn:dml-p}
	\check{p} = \frac{1}{n} \sum_{l=1}^L \sum_{i \in I_l} b \big( \zeta(Z_i,X_i, \check{\pi}^l) \big( \iota \mathbf{1}\{T_i=t\} - \check{P}^l(X_i) \big) + \check{P}^l(X_i)   \big).
\end{align}
Then the asymptotic variance can be estimated by
\begin{align*}
	\check{\sigma}^2 & = \frac{1}{n} \sum_{l=1}^L \sum_{i \in I_l} \big( \psi^{\beta}\big( Y_i,T_i,Z_i,X_i, \check{\beta}, \check{p}, \check{Q}^l, \check{P}^l, \check{\pi}^l \big) \big)^2 \\
	& = \frac{1}{n} \sum_{l=1}^L \sum_{i \in I_l} \big( \psi\big( Y_i,T_i,Z_i,X_i, \check{\beta}, \check{Q}^l, \check{P}^l, \check{\pi}^l \big) / \check{p} \big)^2.
\end{align*}

We want to establish the convergence results for the DML estimator uniformly over a class of data generating processes (DGPs) defined as follows.
For any two constants $c_1 > c_0 >0$, let $\mathcal{P}(c_1,c_0)$ be the set of joint distributions of $(Y,T,Z,X)$ such that 
\begin{enumerate} [label = (\roman*)]
	\item $p \in [c_0,1]$,
	\item $\mathbb{E}[\psi^2],\pi_z^o(X) \geq c_0,z \in \mathcal{Z}$, and $|Y \mathbf{1}\{T=t\}|,|Y \mathbf{1}\{T=t\} - Q_t^o(X)| \leq c_1$.
\end{enumerate}
The first condition excludes the case where $\beta$ is weakly identified (when $p$ can be arbitrarily close to zero). Inference under weak identification is studied in the next section.
The following theorem establishes the asymptotic properties of the DML estimation procedure. In particular, the estimator achieves the SPEB.

\begin{theorem} \label{thm:dml}
	Let Assumptions \ref{ass:CI} and \ref{ass:UM} hold.
	Assume the following conditions on the nuisance parameter estimators $(\check{Q}^l,\check{P}^l,\check{\pi}^l)$:
	\begin{enumerate}[label = (\roman*)]
		\item For $z \in \mathcal{Z}$, $|\check{Q}^l|$ is bounded, $\check{P}^l_{z}$ and $\check{\pi}^l_z \in [0,1]$, and $\check{\pi}^l_z$ is bounded away from zero.
		\item $\max_{z \in \mathcal{Z}} \big( \Vert \hat{Q} - Q^o \rVert_2 \vee \Vert \hat{P} - P^o \rVert_2 \vee  \lVert \hat{\pi} - \pi^o \rVert_2 \big) = o_p\big( n^{-1/4} \big)$.
	\end{enumerate}
	Then the estimator $\check{\beta}$ obeys that
	\begin{equation*}
		\sigma^{-1} \sqrt{n} \big( \check{\beta} - \beta \big) \Rightarrow N(0,1),
 	\end{equation*}
	uniformly over the DGPs in $\mathcal{P}(c_0,c_1)$. Moreover, the above convergence result continues to hold when $\sigma$ is replaced by the estimator $\check{\sigma}$. 
\end{theorem}

The proof verifies the conditions of Theorem 3.1 in \cite{chernozhukov2018double}. The essential restriction is on the uniform convergence rate for the estimators of the nuisance parameters. In low-dimensional settings, one can consider the local polynomial regression for estimation of the conditional expectations. Under suitable conditions \citep{hansen2008uniform,masry1996multivariate}, the uniform convergence rate of the local polynomial estimators is $(\log n / n)^{2/(d_X+4)}$, which is $o(n^{-1/4})$ if $d_X \leq 3$. In high-dimensional settings, as pointed out by \cite{chernozhukov2018double}, the rate $o(n^{-1/4})$ is often available for common machine learning methods under structured assumptions on the nuisance parameters.\footnote{This includes the LASSO method under sparsity of the nuisance space. See, for example, \cite{buhlmann2011statistics}, \cite{belloni2011l1}, and \cite{belloni2013least}. However, \cite{chernozhukov2018double} also indicate that to prove that machine learning methods achieve the $o(n^{-1/4})$ rate, one will eventually have to use related entropy conditions.} This means that the asymptotic normality of the DML estimator continues to hold. 


Theorem \ref{thm:dml} can be directly used to conduct inference on $\beta$. Confidence regions can be constructed by inverting the usual $t$-tests. These confidence regions are uniformly valid since the convergence results in the above theorem hold uniformly over $\mathcal{P}$. In the next section, we explain why uniform validity is crucial when dealing with weak identification issues.

\section{Weak Identification} \label{sec:weak}

The convergence result established in Theorem \ref{thm:dml} is uniform over the set of DGPs with type probability $p$ bounded away from zero. However, the identification of $\beta$ would be weak in the case where $p$ can be arbitrarily close to zero. This leads to distortion of the uniform size of the test and poor asymptotic approximation in finite-sample settings. This section studies this weak identification issue and proposes an inference procedure that is robust against such a problem.

We begin with a heuristic illustration of the weak identification problem. To ease notation, define $\upsilon = \beta p$ and 
\begin{align*}
	\check{\upsilon} = \check{\beta} \check{p} = \sum_{l=1}^L \sum_{i \in I_l}b \big( \zeta(Z_i,X_i, \check{\pi}^l) \big( \iota (Y_i \mathbf{1}\{T_i=t\}) - \check{Q}^l(X_i) \big) + \check{Q}^l(X_i) \big).
\end{align*}
After a simple calculation, we can write
\begin{align*}
	\check{\beta} - \beta = \frac{\sqrt{n}(\check{\upsilon}-\upsilon) - \beta \sqrt{n}(\check{p}-p)}{\sqrt{n}(\check{p}-p) + \sqrt{n}p} .
\end{align*}
In the above expression, we can interpret the estimation errors $\sqrt{n}(\check{\upsilon}-\upsilon)$ and $\sqrt{n}(\check{p}-p)$ as the noises, while the signal is the term $\sqrt{n}p$. Under the usual asymptotics where $p > 0$ is fixed, the noise terms are bounded in probability, whereas the signal term $\sqrt{n}p \rightarrow \infty$. Hence, the signal dominates the noise, and the estimator $\check{\beta}$ is consistent. However, under asymptotics with a drifting sequence $p = p_n \rightarrow 0$ and $\sqrt{n}p$ converging to a finite constant, the signal and the noise are of the same magnitude, which results in the inconsistency of $\check{\beta}$. This problem is the weak identification issue. In the weak IV literature, a common measure of identification strength is the so-called \textit{concentration parameter}. In our case, the concentration parameter is given by $\sqrt{n}p$ where $\sqrt{n}p \rightarrow \infty$ corresponds to strong identification, and identification is weak when the limit of $\sqrt{n}p$ is finite.

While weak identification is a finite-sample issue, it is formalized using the asymptotic framework. However, the illustration above using asymptotics under drifting sequences is not meant to model DGPs that vary with the sample size $n$. Instead, it is a tool used to detect the lack of uniform convergence. In fact, controlling the uniform size of the test is the key to solving weak identification problems.\footnote{See, for example, \cite{imbens2004confidence}, \cite{mikusheva2007uniform}, and \cite{andrews2020generic}.} Formally, the uniform size of a test is the large sample limit of the supremum of the rejection probability under the null hypothesis, where the supremum is taken over the nuisance parameter space. When testing a null hypothesis on $\beta$ in the GLATE model, the supremum mentioned above is taken over all values of $p>0$. That is, a desirable test should have rejection probability under the null converge to the nominal size uniformly over $p \in (0,1]$. From the previous discussion, we can see that the uniform size can not be controlled using the usual $t$-statistic $\sqrt{n}(\check{\beta} - \beta)/\check{\sigma}$. This failure of uniform convergence, however, does not conflict with Theorem \ref{thm:dml}, where the uniform convergence of $\check{\beta}$ is established only after restricting $p$ to be bounded away from zero.

Inference procedures that are robust against weak identification can be obtained by directly imposing the null hypothesis in the construction of the test statistic. One such example is the well-known Anderson-Rubin (AR) statistic in the weak IV literature. Its idea can be generalized to the GLATE model. We first consider testing the two-sided hypothesis $H_0:\beta = \beta_0$ versus $H_1:\beta \ne \beta_0$. To control the uniform size of the test, we need the test statistic to converge uniformly on the parameter space where (1) $\beta = \beta_0$, and (2) $p$ is allowed to be arbitrarily close to zero. A null-restricted $t$-statistic can be obtained as follows. Notice that when $p>0$, $\beta = \beta_0$ is equivalent to 
\begin{align} \label{eqn:WI-moment-condition}
	0 = \upsilon - \beta_0 p = \mathbb{E} \left[ \psi(Y,T,Z,X,\beta_0,Q,P,\pi) \right].
\end{align}
Its estimate can be written as
\begin{align} \label{eqn:WI-robust}
	\check{\upsilon} - \beta_0 \check{p} = (\check{\upsilon} - \upsilon) - \beta (\check{p} - p) + (\beta - \beta_0) p.
\end{align}
Under the null hypothesis $\beta = \beta_0$, the above estimate does not depend on the concentration parameter $\sqrt{n}p$ and consists only of the noise terms $\check{\upsilon} - \upsilon$ and $\check{p} - p$, whose uniform convergence can be established directly. 

For implementation, this test statistic can be obtained as a straightforward application of the DML procedure described in the previous section to the moment condition (\ref{eqn:WI-moment-condition}).
As a consequence of Proposition \ref{prop:Neyman}, the above moment condition satisfies the Neyman orthogonality condition regardless of the true value of $\beta$. More specifically, the null-restricted $t$-statistic is defined to be 
\begin{align*}
	\check{\rho} = \sqrt{n}(\check{\upsilon} - \beta_0 \check{p}) / \check{\sigma}_\psi,
\end{align*}
where
\begin{align*}
	\check{\sigma}_\psi^2 = \frac{1}{n} \sum_{l = 1}^L \sum_{i \in I_l} \psi(Y_i,T_i,Z_i,X_i,\beta_0,\check{Q}^l,\check{P}^l,\check{\pi}^l)^2.
\end{align*}
The corresponding test of $H_0:\beta = \beta_0$ against $H_1:\beta \ne \beta_0$ rejects for large values of $|\check{\rho}|$.

The same methodology can be applied to testing one-sided hypothesis $H_0:\beta \leq \beta_0$ versus $H_1:\beta > \beta_0$. Under the null hypothesis, $(\beta - \beta_0)p$ is non-positive, suggesting that the test should reject for large values of $\check{\rho}$. Notice that this relies on knowing the sign of $p$ due to the GLATE model structure. This restriction on the sign of $p$ is similar to knowing the first-stage sign in the linear IV model, which is studied by \cite{andrews2017unbiased} in the context of unbiased estimation.

We now define the set of DGPs that allows $p$ to be arbitrarily close to zero.
For any two constants $c_1>c_0>0$, let $\mathcal{P}^{\text{WI}}(c_0,c_1)$ be the set of joint distributions of $(Y,T,Z,X)$ such that 
\begin{enumerate} [label = (\roman*)]
	\item $p \in (0,1]$,
	\item $\mathbb{E}[\psi^2],\pi_z^o(X) \geq c_0,z \in \mathcal{Z}$, and $\abs{Y \mathbf{1}\{T=t\}},|Y \mathbf{1}\{T=t\} - Q_t^o(X)| \leq c_1$.
\end{enumerate}
For any $\beta' \in \mathbb{R}$, let $\mathcal{P}^{\text{WI}}_{\beta'}(c_0,c_1)$ be the subset of $\mathcal{P}^{\text{WI}}(c_0,c_1)$ in which the true value of the parameter $\beta$ is $\beta'$. In particular, $\mathcal{P}^{\text{WI}}_{\beta_0}(c_0,c_1)$ denotes the subset where the null hypothesis is true. The superscript ``WI'' denotes weak identification. The difference between $\mathcal{P}(c_0,c_1)$ and $\mathcal{P}^{\text{WI}}(c_0,c_1)$ is that $\mathcal{P}^{\text{WI}}(c_0,c_1)$ allows the type probability $p$ to be arbitrarily small, whereas the type probabilities in $\mathcal{P}(c_0,c_1)$ are uniformly bounded away from zero. Denote $\mathcal{N}_{\nu}$ as the $\nu$th quantile of the standard normal distribution. The following theorem establishes that the above testing procedures have uniformly correct sizes and are consistent. 

\begin{theorem} \label{thm:weak-id-test}
	Suppose the conditions on the nuisance parameter estimates in Theorem \ref{thm:dml} hold. Let $\alpha \in (0,1)$ be the nominal size of the tests.
	\begin{enumerate} [label = (\roman*)]
		\item The test that rejects $H_0:\beta = \beta_0$ in favor of $H_0:\beta \ne \beta_0$ when $|\check{\rho}| > \mathcal{N}_{1-\frac{\alpha}{2}}$ has (asymptotically) uniformly correct size and is consistent. That is,
		\begin{align*}
			\sup \big\{ \mathbb{P}_P\big(|\check{\rho}| > \mathcal{N}_{1-\frac{\alpha}{2}}\big): P \in \mathcal{P}^{\text{WI}}_{\beta_0}(c_0,c_1) \big\} \rightarrow \alpha
		\end{align*}
		and
		\begin{align*}
			\mathbb{P}_P \big(|\check{\rho}| > \mathcal{N}_{1-\frac{\alpha}{2}} \big) \rightarrow 1, P \in \mathcal{P}^{\text{WI}}_{\beta}(c_0,c_1), \beta \ne \beta_0.
		\end{align*}
		\item The test that rejects $H_0:\beta \leq \beta_0$ in favor of $H_0:\beta > \beta_0$ when $\check{\rho} > \mathcal{N}_{1-\alpha}$ has (asymptotically) uniformly correct size and is consistent. That is,
		\begin{align*}
			\sup \big\{ \mathbb{P}_P\big(\check{\rho} > \mathcal{N}_{1-\alpha} \big): P \in \mathcal{P}^{\text{WI}}_{\beta}(c_0,c_1),\beta \leq \beta_0 \big\} \rightarrow \alpha
		\end{align*}
		and
		\begin{align*}
			\mathbb{P}_P \big(\check{\rho} > \mathcal{N}_{1-\alpha} \big) \rightarrow 1, P \in \mathcal{P}^{\text{WI}}_{\beta}(c_0,c_1), \beta > \beta_0.
		\end{align*}
	\end{enumerate}
\end{theorem}

\section{Empirical Application} \label{sec:empirical}

In this section, we apply the theoretical results to data from the Oregon Health Insurance Experiment \cite{finkelstein2012oregon} and examine the effects on the health of different sources of health insurance. The experiment is conducted by the state of Oregon between March and September 2008. A series of lottery draws were administered to award the participants the option of enrolling in the Oregon Health Plan Standard, which is a Medicaid expansion program available for Oregon adult residents that have limited income. Follow-up surveys were sent out in several waves to record, among many variables, the participants' insurance plan and health status. \cite{finkelstein2012oregon} obtain the effects of insurance coverage by using a LATE model. We apply the GLATE model can study the effect heterogeneity across different sources of insurance.

According to the data, many lottery winners did not choose to participate in the Medicaid program. Instead, they went with other insurance plans or chose not to have any health insurance. Based on this observation, we can set up the GLATE model. The instrument $Z$ is the binary lottery that determines whether an individual is selected. The covariates $X$ include the number of household members and survey waves. Given $X$, $Z$ is randomly assigned \citep[][p1071]{finkelstein2012oregon}.\footnote{Though the covariates are discrete, the methods developed in this paper are still different from linear regressions in \cite{finkelstein2012oregon}.} The treatment $T$ is the insurance plan, which contains three categories: Medicaid ($m$), non-Medicaid insurance plans ($nm$), and no health insurance ($no$). The second category includes Medicare, private plans, employer plans, and other plans. The counterfactual health plan choices under different lottery results are the variables $T_0$ and $T_1$. The unordered monotonicity condition requires that any participant who changes insurance plan due to winning the lottery does so to enroll in the Medicaid program.

The above setup is the same as Example \ref{eg:3t2z}, with types. We follow the terminologies in \cite{kline2016evaluating} and define the following six type sets by their counterfactual insurance plan choices:
\begin{enumerate}
	\item $no$-never takers: $S \in \Sigma_{no,2} = \{ s_1 \}$, $T_{0} = T_1 = no$;
	\item $nm$-never takers: $S \in \Sigma_{nm,2} = \{ s_2 \}$, $T_{0} = T_1 = nm$;
	\item always takers: $S \in \Sigma_{m,2} = \{ s_3 \}$, $T_{0} = T_1 = m$;
	\item $no$-compliers: $S \in \Sigma_{no,1} = \{ s_4 \}$, $T_0 = no$, $T_1 = m$;
	\item $nm$-compliers: $S \in \Sigma_{nm,1} = \{ s_5 \}$, $T_0 = nm$, $T_1 = m$;
	\item compliers: $S \in \Sigma_{m,1} = \{ s_4, s_5 \}$, $T_{0} \ne m$, $T_1 = m$.
\end{enumerate}
The two groups of never takers choose not to join Medicaid regardless of the offer. Always takers manage to enroll in Medicaid even without an offer. The $no$- and $nm$- compliers switch to Medicaid from no insurance plan and other plans, respectively, upon winning the lottery. Combining these two groups gives the larger set of compliers. 

Table \ref{tb:app-prob} shows the estimated probabilities of the six types.\footnote{We use the data from the 12-month survey. After taking care of the missing values, we are left with $23290$ observations. For cross-fitting, we choose $L=10$.} We can see that half of the population are $no$-never takers, who are never covered by any insurance plan. The compliers make up around one-fifth of the population. There are effectively no $nm$-compliers, meaning that the experiment does not crowd out other insurance plan choices. These findings are consistent with \cite{finkelstein2012oregon}.

\begin{table}[!htbp] 
    \centering
    \begin{tabular}{*3c}
    \toprule
    Type & Probability & Estimate (se) \\ \midrule
	$no$-never takers & $p_{no,2}$ & .492 (.046) \\
	$nm$-never takers & $p_{nm,2}$ & .208 (.018) \\
	always takers & $p_{m,2}$ & .116 (.018) \\
	$no$-compliers & $p_{no,1}$ & .197 (.059) \\
	$nm$-compliers & $p_{nm,1}$ & .010 (.024) \\
	compliers & $p_{m,1}$ & .208 (.060) \\
    \bottomrule
\end{tabular}
\caption{Estimated probability of different types.}
\label{tb:app-prob}
\end{table}

The outcome of interest $Y$ is health status, which is (inversely) measured by the number of days (out of past 30) when poor health impaired regular activities.\footnote{Other types of outcomes are also studied by \cite{finkelstein2012oregon}, including health care utilization and financial strain. Here we only focus on health status for simplicity.} The potential outcomes are denoted by $Y_{no}$, $Y_{nm}$, and $Y_{m}$. By Theorem \ref{thm:id1}, we can identify the distribution of $Y_{no}$ for $no$-never takers and $no$-compliers, the distribution of $Y_{nm}$ for $nm$-never takers and $nm$-compliers, and the distribution of $Y_{nm}$ for always takers and compliers. Table \ref{tb:app-lasf} reports the estimated LASFs.\footnote{The LASF $\beta_{nm,1}$ is excluded because there are few $nm$-compliers as reported in Table \ref{tb:app-prob}.} We can clearly see a pattern of self-selection into the treatment. For example, when there is no insurance coverage, the potential health status of $no$-compliers is worse than $no$-never takers and therefore choose to enroll in Medicaid. 

\begin{table}[!htbp] 
    \centering
    \begin{tabular}{*4c}
    \toprule
    Type & Treatment & LASF & Estimate (se) \\ \midrule
	$no$-never takers & $no$ & $\beta_{no,2}$ & 6.78 (1.19) \\
	$nm$-never takers & $nm$ & $\beta_{nm,2}$ & 7.74 (1.05) \\
	always takers & $m$ & $\beta_{m,2}$ & 9.96 (1.75) \\
	$no$-compliers & $no$ & $\beta_{no,1}$ & 11.50 (2.92) \\
	compliers & $m$ & $\beta_{m,1}$ & 0.48 (3.42) \\
    \bottomrule
\end{tabular}
\caption{Estimated LASFs.}
\label{tb:app-lasf}
\end{table}

\section{Concluding Remarks} \label{sec:conclusion}


In this paper, we considered the estimation of the causal parameters, LASF and LASF-T, in the GLATE model by using the EIF. The proposed DML estimator satisfies the SPEB and can be applied in situations, such as high-dimensional settings, where Donsker properties fail. For inference, we proposed generalized AR tests robust against weak identification issues. 
Currently, empirical researchers use the TSLS and control the covariates linearly in models with multi-valued treatments and instruments. This linear specification does not have LATE interpretation, as pointed out by \cite{blandhol2022tsls}. Therefore, we advocate using the semiparametric methods studied by this paper in those cases.

\bigskip
\begin{center}
{\large\bf SUPPLEMENTARY MATERIAL}
\end{center}

\appendix
\numberwithin{equation}{section}

\section{Technical Proofs} \label{app:proof}
\label{appendix:pf}

In this section, we prove the theorems and propositions stated in the main text. We assume that Assumptions \ref{ass:CI} and \ref{ass:UM} hold throughout this section.

\subsection{Proof of the Identification Results}

\begin{lemma} \label{lm:CI}
	$S \perp Z \mid X$ and $t \in \mathcal{T}$, $Y_t \perp T \mid S,X$.
\end{lemma}
\begin{proof} [Proof of Lemma \ref{lm:CI}]
	The first statement follows from the definition of $S$ and the fact that $Z$ is independent of the vector $( T_{z_1}, \cdots, T_{z_{N_Z}} )$ conditioning on $X$. For the second statement, $T$ is entirely determined by $\left(S,Z,X \right) $. Hence, given $S$ and $X$, $T$ is independent of $Y_t$ since $Z$ is independent of $( Y_{t_1}, \cdots, Y_{t_{N_T}} ) $ conditional on $X$.
\end{proof}

\begin{lemma} \label{lm:HP6}
	For each $t \in \mathcal{T}$ and $k = 1, \cdots, N_Z$, the following identification results hold.
	\begin{enumerate}[label = (\roman*)]
		\item $\mathbb{P}(S \in \Sigma_{t,k} \mid X) = b_{t,k}P_t(X)$ a.s.
		\item $\mathbb{E} \left[ Y_t \mid S \in \Sigma_{t,k}, X\right] = (b_{t,k}  Q_t(X)) / (b_{t,k} P_t(X))$ a.s.
	\end{enumerate}
\end{lemma}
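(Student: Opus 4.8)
The plan is to reduce both identities to a single linear-algebra fact about the matrix $B_t$, namely that the row vector $b_{t,k}$ lies in the row space of $B_t$, and to absorb all of the conditioning on $X$ through Assumption \ref{ass:1}\ref{GLIV:ci}. Throughout I would write $\rho(X) = (P(S = s_1 \mid X), \ldots, P(S = s_{N_S}\mid X))'$ for the vector of conditional type probabilities.

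First I would express the observable conditional probabilities in terms of $\rho(X)$. On the event $\{Z = z_i\}$ one has $T = T_{z_i}$, and $T_{z_i} = t$ exactly when the realized type $s$ satisfies $s[i] = t$; combined with Assumption \ref{ass:1}\ref{GLIV:ci} (which gives $T_{z_i}\perp Z\mid X$) this yields $P_{t,z_i}(X) = P(T_{z_i} = t\mid X) = \sum_{j} \mathbf{1}\{s_j[i]=t\}\,P(S=s_j\mid X)$, i.e.\ $P_{t,Z}(X) = B_t\rho(X)$. By the definition of the partition, $P(S\in\Sigma_{t,k}\mid X) = b_{t,k}\rho(X)$.

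The key step, and the main obstacle, is to show $\tilde{b}_{t,k}B_t = b_{t,k}$, equivalently that $b_{t,k}$ lies in the row space of $B_t$; once this holds the Moore--Penrose identity $b_{t,k}B_t^+B_t = b_{t,k}$ gives $\tilde{b}_{t,k}P_{t,Z}(X) = \tilde{b}_{t,k}B_t\rho(X) = b_{t,k}\rho(X) = P(S\in\Sigma_{t,k}\mid X)$, which is part (i). To establish the row-space membership I would invoke the lonesum structure of $B_t$ implied by Assumption \ref{ass:1}\ref{GLIV:um}: unordered monotonicity forces the supports of the rows of $B_t$ (the sets $\{j : s_j[i]=t\}$) to be totally ordered by inclusion, so after reordering the instruments every column with column-sum $k$ has the identical pattern $(\underbrace{1,\ldots,1}_{k},0,\ldots,0)'$. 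Consequently the row combination $w = e_k - e_{k+1}$ (with $w = e_{N_Z}$ when $k = N_Z$) satisfies $w'B_t = b_{t,k}$, proving membership. I expect the careful bookkeeping of this nesting/lonesum argument, rather than the probabilistic steps, to be the delicate part; it is essentially the content established in \cite{heckman2018unordered}.

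For part (ii) I would run the same decomposition on the outcome moments. Writing $\mu(X) = (\mathbb{E}[g(Y_t)\mathbf{1}\{S=s_1\}\mid X], \ldots, \mathbb{E}[g(Y_t)\mathbf{1}\{S=s_{N_S}\}\mid X])'$, the facts that $Y = Y_t$ and $\mathbf{1}\{T=t\} = \mathbf{1}\{T_{z_i}=t\}$ on $\{Z=z_i\}$, together with the joint conditional independence $(Y_t,T_{z_i})\perp Z\mid X$ from Assumption \ref{ass:1}\ref{GLIV:ci}, give $g_{t,z_i}(X) = \mathbb{E}[g(Y_t)\mathbf{1}\{T_{z_i}=t\}\mid X] = \sum_j \mathbf{1}\{s_j[i]=t\}\,\mathbb{E}[g(Y_t)\mathbf{1}\{S=s_j\}\mid X]$, so $g_{t,Z}(X) = B_t\mu(X)$. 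Applying $\tilde{b}_{t,k}B_t = b_{t,k}$ again yields $\tilde{b}_{t,k}g_{t,Z}(X) = b_{t,k}\mu(X) = \mathbb{E}[g(Y_t)\mathbf{1}\{S\in\Sigma_{t,k}\}\mid X]$. Dividing by part (i) and using the elementary identity $\mathbb{E}[g(Y_t)\mid S\in\Sigma_{t,k},X] = \mathbb{E}[g(Y_t)\mathbf{1}\{S\in\Sigma_{t,k}\}\mid X]/P(S\in\Sigma_{t,k}\mid X)$, valid since Assumption \ref{ass:1}\ref{GLIV:fs} keeps the denominator positive, delivers the stated formula.
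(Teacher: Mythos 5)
Your proposal is correct, but it takes a more self-contained route than the paper. The paper's own proof is a two-line reduction: it observes that, conditional on $X$, $B_t[i,j] = \mathbf{1}\{T = t \mid Z = z_i, S = s_j\}$ coincides with the definition of $B_t$ in \cite{heckman2018unordered}, that $\tilde{b}_{t,k} = b_{t,k}B_t^+$ is therefore constant across values of $X$, and then invokes their Theorem~6 wholesale. You instead reprove the content of that theorem: you derive $P_{t,Z}(X) = B_t\rho(X)$ and $g_{t,Z}(X) = B_t\mu(X)$ from Assumption~\ref{ass:1}\ref{GLIV:ci} (correctly using the \emph{joint} conditional independence of $(Y_t, S)$ from $Z$ given $X$, which is what Lemma~\ref{lm:CI} isolates), and you supply the key linear-algebra step $\tilde{b}_{t,k}B_t = b_{t,k}$ by showing $b_{t,k}$ lies in the row space of $B_t$ via the nested-rows (lonesum) structure forced by Assumption~\ref{ass:1}\ref{GLIV:um} — the explicit witness $w = e_k - e_{k+1}$ (or $e_{N_Z}$ when $k = N_Z$) checks out, since after reordering every column of column-sum $m$ equals $(1^m, 0^{N_Z - m})'$, so $w'B_t$ picks out exactly the columns with sum $k$. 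Your version makes transparent precisely where unordered monotonicity enters, at the cost of redoing the bookkeeping the paper outsources; the paper's version is shorter but hinges on the observation (which it makes explicitly) that introducing $X$ leaves $B_t$ and $\tilde{b}_{t,k}$ non-random. One minor caveat, shared with the lemma as stated: Assumption~\ref{ass:1}\ref{GLIV:fs} guarantees $P(S \in \Sigma_{t,k}\mid X) > 0$ for $k = 1, \ldots, N_Z - 1$ (it equals a difference of adjacent $P(T_z = t\mid X)$'s, nonnegative by nesting and nonzero by the first stage), but for $k = N_Z$ positivity of the denominator is not implied by the stated assumptions, so the ``a.s.'' claim there implicitly requires $p_{t,N_Z} > 0$.
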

\begin{proof} [Proof of Lemma \ref{lm:HP6}]
	This is Theorem T-6 in \cite{heckman2018unordered}. The conditioning is explicitly presented.
\end{proof}

\begin{proof}[Proof of Theorem \ref{thm:id1}]
	The first statement follows from applying the law of iterated expectation to Lemma \ref{lm:HP6}(i). For the second statement, we can apply Bayes rule to Lemma \ref{lm:HP6} and obtain that
		\begin{align*}
			\mathbb{E} \left[ Y_t \mid S \in \Sigma_{t,k} \right] &= \int \mathbb{E} \left[ Y_t \mid S \in \Sigma_{t,k}, X=x \right] f_{X\mid S \in \Sigma_{t,k}}(x)dx \\
			& = \int \mathbb{E} \left[ Y_t \mid S \in \Sigma_{t,k}, X=x \right] \frac{\mathbb{P}(S \in \Sigma_{t,k} \mid X=x)}{\mathbb{P}(S \in \Sigma_{t,k})} f_X(x)dx \\
			& =  \mathbb{E} \left[ b_{t,k} Q_t(X) \right] / p_{t,k} ,
		\end{align*}
		where $f_{X\mid S \in \Sigma_{t,k}}$ denotes the conditional density function of $X$ given type $S \in \Sigma_{t,k}$.
\end{proof}

\begin{proof}[Proof of Theorem \ref{thm:id2}]
	By Lemma L-16 of \cite{heckman2018web}, we know that under the unordered monotonicity assumption, $B_t[\cdot,i] = B_t[\cdot,i']$ for all $s_i,s_{i'} \in \Sigma_{t,k}$. Thus, the set $\mathcal{Z}_{t,k}$ always exists.
	For the first statement, we have
		\begin{align*} 
			\mathbb{P}\left( T = t, S \in \Sigma_{t,k}  \right) & = \mathbb{P}\left( Z \in \mathcal{Z}_{t,k}, S \in \Sigma_{t,k}  \right) \\
			& = \mathbb{E} \left[ \mathbb{P}\left( Z \in \mathcal{Z}_{t,k}, S \in \Sigma_{t,k} \mid X \right) \right] \\
			& = \mathbb{E} \left[ \mathbb{P}\left( Z \in \mathcal{Z}_{t,k} \mid X \right) \mathbb{P}\left(  S \in \Sigma_{t,k} \mid X \right) \right] \\
			& = \mathbb{E} \left[ b_{t,k} P_t(X)\pi_{{t,k}}(X)  \right],
		\end{align*}
		where the second equality follows from the law of iterated expectations and the third equality follows from the fact that $Z \perp S \mid X$ (Lemma \ref{lm:CI}).
		For the second statement, notice that 
		\begin{align*}
			\mathbb{P}(T=t,S \in \Sigma_{t,k} \mid X=x) & = \mathbb{P}(T=t \mid S \in \Sigma_{t,k} , X=x) \mathbb{P}(S \in \Sigma_{t,k} \mid X=x) \\
			& = \mathbb{P}(Z \in \mathcal{Z}_{t,k} \mid X) \mathbb{P}(S \in \Sigma_{t,k} \mid X=x) \\
			& = \pi_{t,k}(X) b_{t,k}P_t(X).
		\end{align*}
		By Lemma \ref{lm:CI}, we know that
		\begin{align*}
			\mathbb{E} \left[ Y_t \mid T=t, S \in \Sigma_{t,k}, X=x \right] = \mathbb{E} \left[ Y_t \mid S \in \Sigma_{t,k}, X=x \right].
		\end{align*}
		Therefore, we can apply Bayes rule and obtain that 
		\begin{align*}
			&\mathbb{E} \left[ Y_t \mid T=t, S \in \Sigma_{t,k} \right] \\
			=& \int \mathbb{E} \left[ Y_t \mid T=t,S \in \Sigma_{t,k}, X=x \right] f_{X\mid T=t, S \in \Sigma_{t,k}}(x)dx \\
			=& \int \mathbb{E} \left[ Y_t \mid S \in \Sigma_{t,k}, X=x \right] \frac{P(T=t,S \in \Sigma_{t,k} \mid X=x)}{P(T=t,S \in \Sigma_{t,k})} f_X(x)dx \\
			=& \int \frac{b_{t,k} Q_t(X)}{b_{t,k}P_t(X)} \times \frac{\pi_{t,k}(X) b_{t,k}P_t(X)}{q_{t,k}} f_X(x)dx \\ 
			=& \mathbb{E} \left[  b_{t,k} Q_t(X)\pi_{t,k}(X) \right] / q_{t,k}.
		\end{align*}

\end{proof}

\subsection{Semiparametric Efficiency Calculations}

We follow the method developed by \cite{newey1990semiparametric}. The likelihood of the GLATE model can be specified as
\begin{equation*}
	\mathcal{L}\left( Y,T,Z,X \right) 
	= f_X(X) \prod_{z \in \mathcal{Z}} \Big( f_{z}(Y,T \mid X) \pi_{z}(X) \Big)^{\mathbf{1}\{Z = z\}} ,
\end{equation*}
where $f_z(\cdot, \cdot \mid X)$ denotes the conditional density of  $Y,T$ given $Z =z$ and $X$. In a regular parametric submodel, where the true underlying probability measure $P$ is indexed by $\theta^o$, we use the following notations to represent the score functions:
\begin{align*}
	& s_z(Y, Z \mid X;\theta) = \frac{\partial }{ \partial \theta} \log \left( f_z(Y, T \mid X ; \theta) \right) ,\\
	& s_{\pi}(Z \mid X; \theta) = \sum_{z \in \mathcal{Z}}\mathbf{1}\{Z = z\} \frac{\partial }{ \partial \theta} \log \left( \pi_{z}(X ; \theta) \right)  ,\\
	& s_X(X;\theta) = \frac{\partial }{ \partial \theta} \log \left( f_X(X ; \theta) \right) .
\end{align*}
	The score in a regular parametric submodel is
	\begin{align*}
		s_{\theta^o}(Y,T,Z,X) = \sum_{z \in \mathcal{Z}} \mathbf{1}\{Z = z\} s_{z}\left( Y,T \mid X; \theta^o \right)  + s_{\pi}(Z \mid X; \theta^o)  + s_X(X; \theta^o).
	\end{align*}
	Hence, the tangent space of the model is 
	\begin{align*}
		\mathscr{S} & = \big\{ s \in L^2_0: s(Y,T,Z,X) =  \sum_{z \in \mathcal{Z}} \mathbf{1}\{Z = z\} s_{z}\left( Y,T \mid X\right)  + s_{\pi}(Z \mid X)  + s_X(X) \\ 
		& \quad \quad \text{ for some } s_{z}, s_{\pi}, s_X \text{ such that } \int s_z(y,t \mid X ) f_z(y,t \mid X) dydt \equiv 0, \forall z; \\
		& \quad \quad \sum_{z \in \mathcal{Z}} s_{\pi}(z \mid X) \pi_z(X) \equiv 0 \text{, and } \int s_X(x) f_X(x) dx = 0 \big\},
	\end{align*}
	where $L^2_0$ is a subspace of $L^2$ that contains the mean zero functions.

\begin{proof} [Proof of Theorem \ref{thm:speb1}] 
	We only prove statements (i) and (ii) since (iii) and (iv) are easier cases that can be proved along the way. We start with the first statement. The path-wise differentiability of the parameter $\beta_{t,k}$ can be verified in the following way: in any parametric submodel, we have
	\begin{align*}
			\frac{\partial }{ \partial \theta} \beta_{t,k}(\theta) \Big|_{\theta = \theta^o} & =  \frac{\partial }{\partial \theta}( b_{t,k} \mathbb{E}_\theta \left[  Q_t(X) \right]/p_{t,k} ) \big|_{\theta = \theta^o} \\
			& = \frac{1}{p_{t,k}} \left( (\partial b_{t,k} \mathbb{E}_\theta \left[  Q_t(X) \right]/\partial \theta) |_{\theta = \theta^o} - (b_{t,k} \mathbb{E}_\theta \left[  Q_t(X) \right] / p_{t,k}) (\partial p_{t,k}/\partial \theta) |_{\theta = \theta^o} \right) \\
			& = \frac{1}{p_{t,k}} b_{t,k}  \left( \frac{\partial }{\partial \theta} \mathbb{E}_\theta \left[  Q_t(X) \right] \big|_{\theta = \theta^o} -  \frac{\partial }{\partial \theta} \mathbb{E}_\theta \left[  P_t(X) \right] \big|_{\theta = \theta^o} \beta_{t,k} \right),
		\end{align*}
		where $\frac{\partial }{\partial \theta} \mathbb{E}_\theta \left[  Q_t(X) \right] |_{\theta = \theta^o}$ and $\frac{\partial }{\partial \theta} \mathbb{E}_\theta \left[  P_t(X) \right] |_{\theta = \theta^o}$ are $N_Z \times 1$ random vectors whose typical element can be represented respectively by
		\begin{align*}
			& \int y \mathbf{1}\{\tau = t\} s_{z}(y,\tau \mid x; \theta^o) f_{z}(y,\tau \mid x ; \theta^o) f_X(x ; \theta^o) dyd\tau dx \\
			+ &  \int y \mathbf{1}\{\tau = t\} s_X( x; \theta^o) f_{z}(y,\tau \mid x ; \theta^o) f_X(x ; \theta^o) dyd\tau dx
		\end{align*}
		and
		\begin{align*}
			& \int \mathbf{1}\{\tau = t\} s_{z}(y,\tau \mid x; \theta^o) f_{z}(y,\tau \mid x ; \theta^o) f_X(x ; \theta^o) dyd\tau dx \\
			+ &  \int \mathbf{1}\{\tau = t\} s_X( x; \theta^o) f_{z}(y,\tau \mid x ; \theta^o) f_X(x ; \theta^o) dyd\tau dx,
		\end{align*}
		respectively, for $z \in \mathcal{Z}$. The EIF is characterized by the condition that
		\begin{align*}
			\frac{\partial }{ \partial \theta} \beta_{t,k}(\theta) \Big|_{\theta = \theta^o} = \mathbb{E} \left[ \psi_{\beta_{t,k}} s_{\theta^o} \right] \text{, and } \psi_{\beta_{t,k}} \in \mathscr{S}.
		\end{align*}
		The expression of $\psi_{\beta_{t,k}}$ given in Equation (\ref{eqn:Psi_beta}) meets the above requirements. In particular, the correspondence between terms in the EIF and path-wise derivative appears exactly as in Lemma 1 of \cite{hong2010supplement}.

		For the second statement,
		the path-wise derivative of $\gamma_{t,k}$ can be computed similarly. 
		\begin{align*}
			\frac{\partial }{ \partial \theta} \gamma_{t,k}(\theta) \Big|_{\theta  = \theta^o} & = \frac{1}{q_{t,k}} b_{t,k}   \frac{\partial }{\partial \theta} \mathbb{E}_\theta \left[  Q_t(X) \pi_{{t,k}}(X) \right] \Big|_{\theta = \theta^o} \\
			& -  \frac{\gamma_{t,k}}{q_{t,k}} b_{t,k}  \frac{\partial }{\partial \theta} \mathbb{E}_\theta \left[  P_t(X) \pi_{{t,k}}(X) \right] \Big|_{\theta = \theta^o} ,
		\end{align*}
		where $\frac{\partial }{\partial \theta} \mathbb{E}_\theta [  Q_t(X) \pi_{W_{t,k}}(X) ] |_{\theta = \theta^o}$ and $\frac{\partial }{\partial \theta} \mathbb{E}_\theta [  P_t(X) \pi_{W_{t,k}}(X) ] |_{\theta = \theta^o}$ are $N_Z \times 1$ random vectors whose typical element can be represented by
		\begin{align*}
			& \int y \mathbf{1}\{\tau = t\} s_{z}(y,\tau \mid x; \theta^o) \pi_{W_{t,k}}(x;\theta^o) f_{z}(y,\tau \mid x ; \theta^o) f_X(x ; \theta^o) dyd\tau dx \\
			+ &  \int y \mathbf{1}\{\tau = t\} s_X( x; \theta^o) \pi_{W_{t,k}}(x;\theta^o) f_{z}(y,\tau \mid x ; \theta^o) f_X(x ; \theta^o) dyd\tau dx \\
			+ & \int y \mathbf{1}\{\tau = t\}  \left(  \frac{\partial }{ \partial \theta} \pi_{{t,k}}(X ; \theta) \big|_{\theta = \theta^o} \right) f_{z}(y,\tau \mid x;\theta^o ) f_X(x;\theta^o) dyd\tau dx ,
		\end{align*}
		and
		\begin{align*}
			& \int \mathbf{1}\{\tau = t\} s_{z}(y,\tau \mid x; \theta^o) \pi_{W_{t,k}}(x;\theta^o) f_{z}(y,\tau \mid x ; \theta^o) f_X(x ; \theta^o) dyd\tau dx \\
			+ &  \int \mathbf{1}\{\tau = t\} s_X( x; \theta^o) \pi_{W_{t,k}}(x;\theta^o) f_{z}(y,\tau \mid x ; \theta^o) f_X(x ; \theta^o) dyd\tau dx \\
			+ & \int \mathbf{1}\{\tau = t\}  \left(  \frac{\partial }{ \partial \theta} \pi_{{t,k}}(X ; \theta) \big|_{\theta = \theta^o} \right) f_{z}(y,\tau \mid x;\theta^o ) f_X(x;\theta^o) dyd\tau dx ,
		\end{align*}
		respectively, for $z \in \mathcal{Z}$. The main difference appears when dealing with the last terms in the above two expressions, which can be matched with terms in the efficient influence function of the following two forms
		\begin{align*}
			& \mathbb{E} \left[ Y \mathbf{1}\{T=t\} \mid Z=z, X \right] \left( \mathbf{1}\{Z \in \mathcal{Z}_{t,k}\} - \pi_{{t,k}}(X)  \right), \text{ and } \\ 
			& \mathbb{E} \left[ \mathbf{1}\{T=t\} \mid Z=z,X \right] \left( \mathbf{1}\{Z \in \mathcal{Z}_{t,k}\} - \pi_{{t,k}}(X)  \right).
		\end{align*}
		Take the latter one as an example. Notice that
		\begin{align*}
			\mathbf{1} \{Z \in \mathcal{Z}_{t,k}\} - \pi_{{t,k}}(X) = \sum_{z \in \mathcal{Z}_{t,k}} \left( \mathbf{1} \{Z =z\} - \pi_{z}(X) \right), 
		\end{align*}
		and
		\begin{align*}
			\left( \mathbf{1} \{Z =z\} - \pi_{z}(X) \right) s_\pi(Z \mid X ; \theta^o) = \frac{\mathbf{1} \{Z =z\}}{\pi_z(X)} \frac{\partial }{ \partial \theta} \pi_{z}(X ; \theta) \big|_{\theta = \theta^o} - \pi_z(X) s_\pi(Z \mid X; \theta^o).
		\end{align*}
		By the law of iterated expectation, we have
		\begin{align*}
			&  \mathbb{E} \left[ \mathbb{E} \left[ \mathbf{1}\{T=t\} \mid Z=z,X \right] \left( \mathbf{1} \{Z =z\} - \pi_{z}(X) \right) s_\pi(Z \mid X; \theta^o) \right] \\
			= &   \mathbb{E} \left[ \mathbb{E} \left[ \mathbf{1}\{T=t\} \mid Z=z,X \right] \mathbb{E} \left[ \mathbf{1} \{Z =z\}/\pi_z(X) \mid X \right] \frac{\partial }{ \partial \theta} \pi_{z}(X ; \theta) \big|_{\theta = \theta^o} \right] \\
			- & \mathbb{E} \left[ \mathbb{E} \left[ \mathbf{1}\{T=t\} \mid Z=z,X \right] \pi_{z}(X) \mathbb{E} \left[ s_\pi(Z \mid X; \theta^o) \mid X \right] \right] \\
			= & \mathbb{E} \left[ \mathbb{E} \left[ \mathbf{1}\{T=t\} \mid Z=z,X \right] \frac{\partial }{ \partial \theta} \pi_{z}(X ; \theta) \big|_{\theta = \theta^o} \right] \\
			= & \int \mathbf{1}\{\tau = t\}  \left(  \frac{\partial }{ \partial \theta} \pi_{z}(X ; \theta) \big|_{\theta = \theta^o} \right) f_{z}(y,\tau \mid x;\theta^o ) f_X(x;\theta^o) dyd\tau dx.
		\end{align*}

\end{proof}

\begin{proof} [Proof of Proposition \ref{prop:cep}]
	This proof is based on Section 4 in \cite{newey1994asymptotic}. We focus on the case of $\beta_{t,k}$. The other cases are similar. To ease notation, let $h_t = \left( h_{Y,t,Z},h_{t,Z},\pi \right)' $.
	The estimator $\hat{\beta}_{t,k}$ is defined by the moment condition 
	\begin{align*}
		\mathbb{E}[M\left(X, \beta_{t,k}, h_t\right)] = 0,
	\end{align*}
	where 
	\begin{align*}
		M\left(X, \beta_{t,k}, h_t\right) \equiv b_{t,k}  \left( \frac{h_{Y,t,z_{1}}(X)}{\pi_{z_1}(X)}, \cdots, \frac{h_{Y,t,z_{N_Z}}(X)}{\pi_{z_{N_Z}}(X)} \right)' - \beta_{t,k} b_{t,k} \left( \frac{h_{t,z_{1}}(X)}{\pi_{z_1}(X)}, \cdots, \frac{h_{t,z_{N_Z}}(X)}{\pi_{z_{N_Z}}(X)} \right)'.
	\end{align*}
	We then compute the derivatives of $M$ with respect to the parameters:
	\begin{align*}
		\mathbb{E} \left[\partial M / \partial \beta_{t,k}   \right] & =- b_{t,k} \mathbb{E} \left[ P_t(X) \right] =  - p^o_{t,k} \\
		\partial M / \partial h_{Y,t,z_i}  |_{h_t = h_t^o} & = b_{t,k}[i] / \pi^o_{z_i}(X)   \equiv \delta_{Y,t,z_i}(X) \\
		\partial / \partial h_{t,z_i} M |_{h_t = h_t^o} & = - (\beta_{t,k}b_{t,k}[i]) / \pi^o_{z_i}(X) \equiv \delta_{t,z_i}(X) \\
		\partial M / \partial \pi_{z_i}   |_{h_t = h_t^o} & = - (b_{t,k}[i] Q^o_{t,z_i}(X)) / \pi^o_{z_i}(X)  + (\beta_{t,k}b_{t,k}[i] P^o_{t,z_i}(X))/ \pi^o_{z_i}(X)  \equiv \delta_{\pi,z_i}(X),
	\end{align*}
	where $b_{t,k}[i]$ denotes the $i$th element of the vector $b_{t,k}$. Define
	\begin{align*}
		\alpha\left( Y,T,Z,X \right)  & \equiv \sum_{z \in \mathcal{Z}} \delta_{Y,t,z}(X) \left( \mathbf{1}\{Z = z\}  Y  \mathbf{1}\{T = t\} - h^o_{Y,t,z}(X) \right) \\
		&  \quad + \sum_{z \in \mathcal{Z}} \delta_{t,z}(X) \left( \mathbf{1}\{Z = z\}  \mathbf{1}\{T = t\} - h^o_{t,z}(X) \right) \\
		& \quad + \sum_{z \in \mathcal{Z}} \delta_{\pi,z}(X) \left( \mathbf{1}\{Z = z\}  - \pi^o_{z}(X) \right) .
	\end{align*}
	We have
	\begin{align*}
		\alpha\left( Y,T,Z,X \right) & = b_{t,k}  \zeta(Z,X,\pi^o) \left(  \iota (Y  \mathbf{1}\{T=t\}) - Q^o_t(X) \right) \\
		& \quad - \beta_{t,k}^o b_{t,k} \zeta(Z,X,\pi^o)   \left( \iota \mathbf{1}\{T=t\} - P^o_t(X) \right).
	\end{align*}
	Then \citeauthor{newey1994asymptotic}'s (\citeyear{newey1994asymptotic}) Proposition 4 suggests that the influence function of the estimator $\hat{\beta}_{t,k}$ is $(M+\alpha) / p_{t,k}$ which is equal to the EIF $\psi^{\beta_{t,k}}$.

\end{proof} 

\subsection{Proof of Robustness Results}

\begin{proof} [Proof of Proposition \ref{prop:DR}]
	We prove the case for $\psi^{p_{t,k}}$, the other cases can be dealt with analogously. First assume $\pi = \pi^o$, then 
	\begin{align*}
		\mathbb{E} \left[ \mathbf{1}\{Z=z\} / \pi_z^o(X) \mid X \right] = 1,
	\end{align*}
	which implies that $\mathbb{E}\left[ \zeta(Z,X, \pi^o) \mid X \right]$ is almost surly equal to the identity matrix $\mathbf{I}$. By the law of total expectations, we have
	\begin{align*}
		\mathbb{E} \left[ \mathbf{1}\{T=t\} \mathbf{1}\{Z=z\} / \pi_z^o(X) \mid X \right] =\mathbb{E} \left[ \mathbf{1}\{T=t\} \mid Z=z, X \right] = P_{t,z}^o(X),
	\end{align*}
	which implies that $\mathbb{E} \left[\zeta(Z,X, \pi^o) \iota \mathbf{1}\{T=t\}  \right] = \mathbb{E} \left[P_t^o(X)  \right]$. Therefore,
	\begin{align*}
		&b_{t,k}\mathbb{E}[ \zeta(Z,X, \pi^o) \left( \iota (\mathbf{1}\{T=t\}) - P_t(X) \right) + P_t(X)  ] \\
		=& b_{t,k}\mathbb{E} \left[\zeta(Z,X, \pi^o) \iota \mathbf{1}\{T=t\}  \right] + b_{t,k}\mathbb{E}\left[ (\mathbf{I} - \zeta(Z,X, \pi^o)) P_t(X)\right] = b_{t,k}\mathbb{E} \left[P_t^o(X)  \right] =  p_{t,k}^o.
	\end{align*}
	Now suppose that $P_t = P_t^o$. Then by the law of total expectation, we have
	\begin{align*}
		& \mathbb{E} [ \mathbf{1}\{Z=z\}(\mathbf{1}\{T=t\} - P_{t,z}^o(X)) \mid X ] \\
		=&  \pi_z(X) \mathbb{E} [ \mathbb{E}[\mathbf{1}\{T=t\}\mid Z=z,X] - P_{t,z}^o(X) \mid X ] = 0.
	\end{align*}
	This implies that $\mathbb{E} [ \zeta(Z,X, \pi) ( \iota (\mathbf{1}\{T=t\}) - P_{t}^o(X) )  ] = 0$. Hence,
	\begin{align*}
		b_{t,k}\mathbb{E} \left[ \zeta(Z,X, \pi) \left( \iota (\mathbf{1}\{T=t\}) - P_t^o(X) \right) + P_t^o(X)\right] = b_{t,k}\mathbb{E} \left[P_t^o(X)  \right]
		= p_{t,k}^o .
	\end{align*}
	This proves the proposition.
\end{proof}

\begin{proof} [Proof of Proposition \ref{prop:Neyman}]
	Since $b_{t,k}$ is a finite vector, it suffices to verify the Neyman orthogonality condition for $\psi_z$, which is defined by
	\begin{align*}
		& \psi_z(Y,T,Z,X,\beta_{t,k},Q_t,P_t,\pi_z) \\
		\equiv & \big( (\mathbf{1}\{Z = z\} / \pi_z(X)) \left( \mathbf{1}\{T=t\} - P_{t,z}(X) \right) + P_{t,z}(X) \big) \beta_{t,k} \\
		& - (\mathbf{1}\{Z = z\} /\pi_z(X)) \left( Y \mathbf{1}\{T=t\} - Q_{t,z}(X) \right) - Q_{t,z}(X).
	\end{align*}
	We want to show that 
	\begin{align*}
		\frac{d}{dr} \mathbb{E} \left[ \psi_z(Y,T,Z,X,\beta_{t,k},Q_t^r,P_t^r,\pi_z^r) \right] \Big|_{r=0} = 0,
	\end{align*}
	where $Q_t^r = Q_t^o + r(Q_t - Q_t^o),$ $P_t^r = P_t^o + r(P_t - P_t^o),$ and $\pi_z^r = \pi_z^o + r(\pi_z - \pi_z^o)$. In fact,
	\begin{align*}
		&\frac{d}{dr} \mathbb{E} \left[ \psi_z(Y,T,Z,X,\beta_{t,k},Q_t^r,P_t^r,\pi_z^r) \right] \big|_{r=0} \\
		=& \mathbb{E} \Big[  \frac{ -\mathbf{1}\{Z = z\} }{(\pi^r_z(X))^2} \left( \mathbf{1}\{T=t\} - P^r_{t,z}(X) \right) \left( \pi_z(X) - \pi^o_z(X) \right) \beta_{t,k} \\
		& + \left(  P_{t,z}(X) - P^o_{t,z}(X) - \frac{\mathbf{1}\{Z = z\} }{\pi^r_z(X)} \left( P_{t,z}(X) - P^o_{t,z}(X) \right)\right) \beta_{t,k} \\
		& +  \frac{ \mathbf{1}\{Z = z\} }{(\pi^r_z(X))^2} \left( Y \mathbf{1}\{T=t\} - Q^r_{t,z}(X) \right) \left( \pi_z(X) - \pi^o_z(X) \right)  \\
		&  - (Q_{t,z}(X) - Q^o_t(X)) + \frac{\mathbf{1}\{Z = z\} }{\pi^r_z(X)} \left( Q_{t,z}(X) - Q^o_{t,z}(X) \right) \Big] \Big|_{r=0} \\
		 =& \mathbb{E} \Big[  \frac{ -\mathbf{1}\{Z = z\} }{(\pi^o_z(X))^2} \left( \mathbf{1}\{T=t\} - P^o_{t,z}(X) \right) \left( \pi_z(X) - \pi^o_z(X) \right) \beta_{t,k} \\
		 & + \left(  P_{t,z}(X) - P^o_{t,z}(X) - \frac{\mathbf{1}\{Z = z\} }{\pi^o_z(X)} \left( P_{t,z}(X) - P^o_{t,z}(X) \right)\right) \beta_{t,k} \\
		 & + \frac{ \mathbf{1}\{Z = z\} }{(\pi^o_z(X))^2} \left( Y \mathbf{1}\{T=t\} - Q^o_{t,z}(X) \right) \left( \pi_z(X) - \pi^o_z(X) \right)  \\
		 &   - (Q_{t,z}(X) - Q^o_{t,z}(X)) + \frac{\mathbf{1}\{Z = z\} }{\pi^o_z(X)} \left( Q_{t,z}(X) - Q^o_{t,z}(X) \right) \Big] ,
	\end{align*}
	which equals zero because of the following three identities:
	\begin{align*}
		\mathbb{E} [ \mathbf{1}\{Z = z\} / \pi^o_z(X) \mid X ] = 1, \\
		\mathbb{E} [ \mathbf{1}\{Z = z\} / \pi^o_z(X) ( \mathbf{1}\{T=t\} - P^o_{t,z}(X) ) \mid X ] = 0, \\
		\mathbb{E} [ \mathbf{1}\{Z = z\} / \pi^o_z(X) ( Y\mathbf{1}\{T=t\} - Q^o_{t,z}(X) ) \mid X ] =0.
	\end{align*}

\end{proof}

\begin{proof} [Proof of Theorem \ref{thm:dml}]
	The asserted claims follow from Theorem 3.1, Theorem 3.2, and Corollary 3.2 of \cite{chernozhukov2018double} (henceforth referred to as the DML paper). We want to verify their Assumption 3.1 and 3.2. Adopting the notation from the DML paper, we let
	\begin{align*}
		\psi^a(T,Z,X,P_t,\pi) = -b_{t,k} \left( \zeta(Z,X, \pi) \left( \iota \mathbf{1}\{T=t\} - P_t(X) \right) + P_t(X) \right)
	\end{align*}
	and
	\begin{align*}
		\psi^b(Y,T,Z,X,Q_t,\pi) = b_{t,k} \left( \zeta(Z,X, \pi) \left( \iota (Y\mathbf{1}\{T=t\}) - Q_t(X) \right) + Q_t(X) \right)
	\end{align*}
	so that the linearity of the moment condition (with respect to $\beta_{t,k}$) is verified by the fact that $\psi = \psi^a \beta^{t,k} + \psi^b$.
	Define\footnote{For simplicity, we drop the superscript $l$ in the nonparametric estimators.}
	\begin{align*}
		\epsilon_n = \max_{z \in \mathcal{Z}} \big( \lVert \hat{Q}_{t,z} - Q_{t,z}^o \rVert_2 \vee \lVert \hat{P}_{t,z} - P_t^o \rVert_2 \vee  \norm{\hat{\pi}_{z} - \pi_{z}^o}_2 \big) .
	\end{align*}
	By assumption on the convergence rates of the nonparametric estimators, we have $\epsilon_n = o(n^{-1/4})$. Define $C_\epsilon = C_{\epsilon,1} \vee C_{\epsilon,2} \vee C_{\epsilon,3} \vee C_{\epsilon,4} $, where $C_{\epsilon,1},C_{\epsilon,2},C_{\epsilon,3},$ and $C_{\epsilon,4}$ are positive constant that only depends on $C$ and $\epsilon$ and are specified later in the proof. Let $\delta_n$ be a sequence of positive constants approaching zero and satisfies that $\delta_n \geq C_\epsilon \big(\epsilon_n^2\sqrt{n} \vee n^{-1/4} \vee n^{-(1-2/q)}\big)$. Such construction is possible since $\sqrt{n}\epsilon_n^2=o(1)$. We set the nuisance realization set $N_n$ (denoted by $\mathcal{T}_N$ in the DML paper) to be the set of all vector functions $(Q_t,P_t,\pi_z:z \in \mathcal{Z})$ consisting of square-integrable functions $Q_{t,z},P_{t,z},$ and $\pi_z$ such that for all $z \in \mathcal{Z}$:
	\begin{align*}
		& \norm{Q_{t,z}}_q \leq C, P_{t,z} \in [0,1], \pi_z \in [\epsilon,1], z \in \mathcal{Z}, \\
		&  \lVert Q_{t,z} - Q_{t,z}^o \rVert_q \vee \lVert P_{t,z} - P_{t,z}^o \rVert_q \vee  \norm{\pi_{z} - \pi_{z}^o}_q  \leq \epsilon_n, \\
		& \norm{\pi_{z} - \pi_{z}^o}_2 \times \big( \lVert Q_{t,z} - Q_{t,z}^o \rVert_2 + \lVert P_{t,z} - P_{t,z}^o \rVert_2 \big) \leq \epsilon_n^2.
	\end{align*}

	Consider Assumption 3.1 in the DML paper. Assumption 3.1(d), the Neyman orthogonality condition, is verified by Proposition \ref{prop:Neyman}, where the validity of the differentiation under the integral operation is verified later in the proof. Assumption 3.1(e), the identification condition, is verified by the condition that $p_{t,k}^o \in [\epsilon,1]$. The remaining conditions of Assumption 3.1 in the DML paper are trivially verified. 
	
	Next, we consider Assumption 3.2 in the DML paper. Note that Assumption 3.2(a) holds by the construction of $N_n$ and $\epsilon_n$ and our assumptions on the nuisance estimates. Assumption 3.2(d) is verified by our assumption that the semiparametric efficiency bound of $\beta_{t,k}$ is above $\epsilon$.
	The remaining task is to verify Assumption 3.2(b) and 3.2(c) in the DML paper. To do that, we choose $n$ sufficiently large and let $(Q_{t,z},P_{t,z},\pi_z:z \in \mathcal{Z})$ be an arbitrary element of the nuisance realization set $N_n$. We keep the above notations throughout the remaining part of the proof. Define
	\begin{align*}
		\psi^a_z(T,Z,X,P_t,\pi_z) = \frac{\mathbf{1}\{Z=z\}}{\pi_z(X)}(\mathbf{1}\{T=t\} - P_{t,z}(X)) + P_{t,z}(X)
	\end{align*}
	and
	\begin{align*}
		\psi^b_z(Y,T,Z,X,Q_t,\pi_z) = \frac{\mathbf{1}\{Z=z\}}{\pi_z(X)}(Y\mathbf{1}\{T=t\} - Q_{t,z}(X)) + Q_{t,z}(X).
	\end{align*}
	Since $\psi^a$ is a linear combination of $\psi^a_z,z\in\mathcal{Z}$ and $\psi^b$ is a linear combination of $\psi^b_z,z\in\mathcal{Z}$, we only need $\norm{\psi^a_z(T,Z,X,P_t,\pi_z)}_q$ and $\norm{\psi^b_z(Y,T,Z,X,Q_t,\pi_z)}_q$ to be uniformly bounded (i.e., the bounds do not depend on $n$) for $z \in \mathcal{Z}$ in order to verify Assumption 3.2(b) in the DML paper. In fact, 
	\begin{align*}
		\norm{\psi^b_z(Y,T,Z,X,P_t,\pi_z)}_q & \leq \norm{ \mathbf{1}\{Z=z\} / \pi_z(X) \abs{Y\mathbf{1}\{T=t\} - Q_{t,z}(X)}}_q + \norm{Q_{t,z}(X)}_q \\
		& \leq \frac{1}{\epsilon}\left( \norm{Y\mathbf{1}\{T=t\}}_{q} + \norm{Q_{t,z}(X)}_{q} \right) + \norm{Q_{t,z}(X)}_q \leq 2C/\epsilon + C,
	\end{align*}
	where we have used the assumption that $\pi_z \geq \epsilon$, $\norm{Y\mathbf{1}\{T=t\}}_q \leq C$, and $\norm{Q_t(X)}_q \leq C$. Similarly, we have
	\begin{align*}
		\norm{\psi^a_z(T,Z,X,P_t,\pi_z)}_q & \leq \norm{ \mathbf{1}\{Z=z\} / \pi_z(X) \abs{\mathbf{1}\{T=t\} - P_{t,z}(X)}}_q + \norm{P_{t,z}(X)}_q \\
		& \leq \frac{1}{\epsilon}\big( 1 + \norm{P_{t,z}(X)}_{q} \big) + \norm{P_{t,z}(X)}_q \leq 2/\epsilon + 1,
	\end{align*}
	where we have used the assumption that $\pi_z \geq \epsilon$ and $P_t \in [0,1]$. Thus, Assumption 3.2(b) in the DML paper is verified. 

	To verify Assumption 3.2(c) in the DML paper, we again only need to verify the corresponding conditions for $\psi^a_z$ and $\psi^b_z$, respectively. For $\psi^a_z$, we have
	\begin{align*}
		& \norm{\psi^a_z(T,Z,X,P_t,\pi_z) - \psi^a_z(T,Z,X,P_t^o,\pi_z^o)}_2 \\
		\leq & \norm{\frac{\pi_z(X) - \pi_z^o(X)}{\pi_z(X) \pi_z^o(X)}}_2 + \norm{\frac{P_{t,z}(X)}{\pi_z(X)} - \frac{P_{t,z}^o(X)}{\pi_z^o(X)} }_2 + \norm{P_{t,z}(X) -P_{t,z}^o(X)}_2 \\
		\leq & \frac{1}{\epsilon^2} \norm{\pi_z(X) - \pi_z^o(X)}_2 + \frac{1}{\epsilon^2} \norm{(P_{t,z}(X) - P_{t,z}^o(X))\pi_z^o(X) + P_{t,z}^o(X)(\pi_z^o(X) - \pi_z(X))}_2 \\
		 & + \norm{P_{t,z}(X) -P_{t,z}^o(X)}_2 \\
		\leq & \frac{2}{\epsilon^2} \norm{\pi_z(X) - \pi_z^o(X)}_2 + \left( 1/\epsilon^2 + 1 \right) \norm{P_{t,z}(X) -P_{t,z}^o(X)}_2 \leq C_{\epsilon,1} \varepsilon_n \leq \delta_n,
	\end{align*}
	where the second to last inequality follows from the fact that $P_{t,z}^o,\pi_z^o \in [0,1]$. For $\psi^b_z$, we have 
	\begin{align*}
		& \norm{\psi^b_z(Y,T,Z,X,Q_t,\pi_z) - \psi^b_z(Y,T,Z,X,Q_t^o,\pi_z^o)}_2 \\
		\leq & \frac{1}{\epsilon^2} \norm{\pi_z^o(X)(Y\mathbf{1}\{T=t\} - Q_{t,z}(X)) - \pi_z(X)(Y\mathbf{1}\{T=t\} - Q_{t,z}^o(X))}_2 \\
		& + \norm{Q_{t,z}(X) -Q_{t,z}^o(X)}_2 \\
		= & \frac{1}{\epsilon^2} \norm{(Y\mathbf{1}\{T=t\} - Q_{t,z}^o(X))(\pi^o_z(X) - \pi_z(X)) + \pi_z^o(X)(Q_{t,z}^o(X) - Q_{t,z}(X))}_2 \\
		& + \norm{Q_{t,z}(X) -Q_{t,z}^o(X)}_2 \\
		\leq & \frac{1}{\epsilon^2} \norm{(Y\mathbf{1}\{T=t\} - Q_{t,z}^o(X))(\pi^o_z(X) - \pi_z(X)) }_2 + \norm{\pi_z^o(X)(Q_{t,z}^o(X) - Q_{t,z}(X))}_2 \\
		& + \norm{Q_{t,z}(X) -Q_{t,z}^o(X)}_2 \\
		\leq & \frac{C}{\epsilon^2} \norm{\pi^o_z(X) - \pi_z(X)}_2 + \left( \frac{1}{\epsilon^2} + 1\right) \norm{Q_{t,z}^o(X) - Q_{t,z}(X)}_2 \leq C_{\epsilon,2}\varepsilon_n \leq \delta_n,
	\end{align*}
	where the last inequality follows from our assumption that $|Y\mathbf{1}\{T=t\} - Q_t^o(X)| \leq C$ and the fact that $\pi_z^o \in [\epsilon,1]$. Combining the above two inequality results, we can verify the first two conditions of Assumption 3.2(c) in the DML paper. 
	
	For the last condition of Assumption 3.2(c) in the DML paper, which bounds the second-order Gateaux derivative, we again consider $\psi^a_z$ and $\psi^b_z$ separately. For $r \in [0,1)$, recall that $Q_{t,z}^r = Q_{t,z}^o + r(Q_{t,z} - Q_{t,z}^o),$ $P_{t,z}^r = P_{t,z}^o + r(P_{t,z} - P_{t,z}^o),$ and $\pi^r_z = \pi^o_z + r(\pi_z - \pi^o_z)$. Clearly, $P_{t,z}^r,\pi_z^r \in [0,1]$. With differentiation under the integral, we have
	\begin{align*}
		& \frac{\partial^2}{\partial r^2} \mathbb{E} \left[ \psi^a_z(T,Z,X,P_t^r,\pi_z^r) \right] \\
	= & \frac{\partial}{\partial r} \mathbb{E} \Big[ \frac{ -\mathbf{1}\{Z = z\} }{(\pi^r_z(X))^2} \left( \mathbf{1}\{T=t\} - P^r_{t,z}(X) \right) \left( \pi_z(X) - \pi^o_z(X) \right) \\
	& +  P_{t,z}(X) - P^o_{t,z}(X) - \frac{\mathbf{1}\{Z = z\} }{\pi^r_z(X)} \left( P_{t,z}(X) - P^o_{t,z}(X) \right)  \Big] \\
	= & \mathbb{E} \Big[ \frac{2 \times \mathbf{1}\{Z=z\}}{(\pi_z^r(X))^3}(\pi_z(X) - \pi_z^o(X))^2(\mathbf{1}\{T=t\} - P_{t,z}^r(X)) \Big] \\
	& + \mathbb{E} \Big[ \frac{\mathbf{1}\{Z=z\}}{(\pi_z^r(X))^2}(\pi_z(X) - \pi_z^o(X))(P_{t,z}(X) - P_{t,z}^o) \Big] \\
	& + \mathbb{E} \Big[ \frac{\mathbf{1}\{Z=z\}}{(\pi_z^r(X))^2}(\pi_z(X) - \pi_z^o(X))(\mathbf{1}\{T=t\} - P_{t,z}^r(X))(P_{t,z}(X) - P_{t,z}^o) \Big] \\
	& - \mathbb{E} \Big[ \frac{\mathbf{1}\{Z=z\}}{\pi_z^r(X)}(\mathbf{1}\{T=t\} - P_{t,z}^r(X))(P_{t,z}(X) - P_{t,z}^o)^2 \Big].
	\end{align*}
	Using the fact that $|\mathbf{1}\{T=t\} - P_t^r(X)| \leq 1$ and $\pi_z^r \geq \epsilon$, we can bound the above derivative by
	\begin{align*}
		\Big| \frac{\partial^2}{\partial r^2} \mathbb{E} \left[ \psi^a_z(T,Z,X,P_t^r,\pi_z^r) \right] \Big| &\leq C_\epsilon \big( \norm{\pi_z(X) - \pi_z^o(X)}_2^2 + \big\lVert P_{t,z}(X) - P_{t,z}^o(X) \big\rVert_2^2 \big) \\
		& \quad + C_\epsilon \norm{\pi_z(X) - \pi_z^o(X)}_2 \times \lVert P_{t,z}(X) - P_{t,z}^o(X) \rVert_2 \\
		& \leq C_{\epsilon,3} \varepsilon_n^2 \leq \delta_n / \sqrt{n}.
	\end{align*}
	By bounding the first and second derivative uniformly with respect to $r$, we know that the differentiation under the integral operation is valid. So the Neyman orthogonality condition is verified. Analogously, we can show that
	\begin{align*}
		& \frac{\partial^2}{\partial r^2} \mathbb{E} \left[ \psi^b_z(Y,T,Z,X,Q_t^r,\pi_z^r) \right] \\
		= & \mathbb{E} \Big[ \frac{2 \times \mathbf{1}\{Z=z\}}{(\pi_z^r(X))^3}(\pi_z(X) - \pi_z^o(X))^2(Y\mathbf{1}\{T=t\} - Q_{t,z}^r(X)) \Big] \\
	& + \mathbb{E} \Big[ \frac{\mathbf{1}\{Z=z\}}{(\pi_z^r(X))^2}(\pi_z(X) - \pi_z^o(X))(Q_{t,z}(X) - Q_{t,z}^o) \Big] \\
	& - \mathbb{E} \Big[ \frac{\mathbf{1}\{Z=z\}}{(\pi_z^r(X))^2}(\pi_z(X) - \pi_z^o(X))(Y\mathbf{1}\{T=t\} - Q_{t,z}^r(X))(Q_{t,z}(X) - Q_{t,z}^o) \Big] \\
	& - \mathbb{E} \Big[ \frac{\mathbf{1}\{Z=z\}}{\pi_z^r(X)}(Y\mathbf{1}\{T=t\} - Q_{t,z}^r(X))(Q_{t,z}(X) - Q_{t,z}^o)^2 \Big].
	\end{align*}
	Under the assumption $|Y\mathbf{1}\{T=t\} - Q_{t,z}^o(X)| \leq C$, we have 
	\begin{align*}
		|Y\mathbf{1}\{T=t\} - Q_{t,z}^r(X)| \leq |Y\mathbf{1}\{T=t\} - Q_{t,z}^o(X)| + r|Q_{t,z}(X) - Q_{t,z}^o| \leq C + 1, 
	\end{align*}
	for all $r \in [0,1]$ and $n$ large enough. Then we can bound the above derivative by
	\begin{align*}
		\Big| \frac{\partial^2}{\partial r^2} \mathbb{E} \left[ \psi^b_z(Y,T,Z,X,Q_t^r,\pi_z^r) \right] \Big| \leq & C_\epsilon \big( \norm{\pi_z(X) - \pi_z^o(X)}_2^2 + \big\lVert Q_{t,z}(X) - Q_{t,z}^o(X) \big\rVert_2^2 \big) \\
		& + C_\epsilon \norm{\pi_z(X) - \pi_z^o(X)}_2 \times \norm{Q_{t,z}(X) - Q_{t,z}^o(X)}_2 \\
		\leq & C_{\epsilon,4} \varepsilon_n^2 \leq \delta_n/\sqrt{n}.
	\end{align*}
	Therefore, we have verified the last condition of Assumption 3.2(c) in the DML paper.

	Lastly, we need to verify the condition on $\delta_n$ in Theorem 3.1 and 3.2 in the DML paper, that is, $\delta_n \geq n^{-[(1-2/q)\wedge(1/2)]}$. This directly follows from the construction of $\delta_n$.
\end{proof}

\subsection{Proof of Weak IV Inference Results}

\begin{proof} [Proof of Theorem \ref{thm:weak-id-test}]
	We first prove part (i). Consider applying the DML method to the moment condition (\ref{eqn:WI-moment-condition}) 
		to estimate the parameter $\upsilon - \beta_0 p$ and obtain the standard error. We want to show the convergence in distribution of
		\begin{align} \label{eqn:normalized-robust-statistic}
			\check{\sigma}_{\psi}^{-1} \sqrt{n} \left[ (\check{\upsilon} - \beta_0 \check{p}) - (\upsilon - \beta_0 p) \right] = \check{\rho} - \sqrt{n}(\upsilon - \beta_0 p)/\check{\sigma}_{\psi}
		\end{align}
		to the standard normal distribution uniformly over the DGPs in $\mathcal{P}^{\text{WI}}(c_0,c_1)$. To do that, we need to verify Assumptions 3.1 and 3.2 in the DML paper regarding the above moment condition. Assumptions 3.1(a)-(c) hold trivially. Assumption 3.1(d), the Neyman orthogonality condition, is verified by Proposition \ref{prop:Neyman}. That is, the Gateaux derivatives with respect to the nuisance parameters are zero regardless of the value of $\beta$. Assumption 3.1(e), the identification condition, is verified since the Jacobian of the parameter in the moment condition is $1$. Assumption 3.2 in the DML paper can be verified in the same way as in the proof of Theorem \ref{thm:dml}. For brevity, we do not repeat the verification here.

		For DGPs in $\mathcal{P}^{\text{WI}}_{\beta_0}(c_0,c_1)$, (\ref{eqn:normalized-robust-statistic}) is equal to $\check{\rho}$. Therefore, the uniform convergence in distribution of $|\check{\rho}|$ is established in the null space, and the size of the test is uniformly controlled accordingly. For DGPs in $\mathcal{P}^{\text{WI}}_{\beta}(c_0,c_1)$, where $\beta > \beta_0$, we have
		\begin{align*}
			\check{\rho} & = \left( \check{\rho} - \sqrt{n}(\upsilon - \beta_0 p)/\check{\sigma}_{\psi}  \right) + \sqrt{n}(\upsilon - \beta_0 p)/\check{\sigma}_{\psi} \\
			&= \left( \check{\rho} - \sqrt{n}(\upsilon - \beta_0 p)/\check{\sigma}_{\psi}  \right) + \sqrt{n}(\beta - \beta_0)p/\check{\sigma}_{\psi}.
		\end{align*}
	The first term on the RHS of the last equality converges in distribution to $N(0,1)$. In contrast, the second term diverges to infinity since $\check{\sigma}_{\psi}$ converges in probability to $\sigma_\psi \geq \sqrt{c_0}$ by Theorem 3.2 in the DML paper. Therefore, the probability of $|\check{\rho}|$ exceeding any finite number converges to 1. The case where $\beta < \beta_0$ is essentially the same.

	To prove part (ii) of the theorem, notice that $(\beta - \beta_0)p \leq 0$ for any DGP in the null space $\bigcup_{\beta \leq \beta_0}\mathcal{P}^{\text{WI}}_{\beta}(c_0,c_1)$, which implies that $\check{\rho} \leq \check{\rho} - \sqrt{n}(\upsilon - \beta_0 p)/\check{\sigma}_{\psi}$. Therefore,
	\begin{align*}
		\sup_P \mathbb{P}_P\big(\check{\rho} > \mathcal{N}_{1-\alpha} \big) \leq \sup_P \mathbb{P}_P\left(\check{\rho}- \sqrt{n}(\upsilon - \beta_0 p)/\check{\sigma}_{\psi} > \mathcal{N}_{1-\alpha} \right) \rightarrow \alpha,
	\end{align*}
	where the supremum is taken over $P \in \bigcup_{\beta \leq \beta_0}\mathcal{P}^{\text{WI}}_{\beta}(c_0,c_1)$. Consistency can be derived in the same way as part (i).
\end{proof}

\section{Implicitly Defined Parameters} \label{sec:spe2}

This section studies general parameters defined implicitly through moment conditions. We allow the moment conditions to be non-smooth, which is the case when the parameter of interest is the quantile. We also allow the moment conditions to be overidentifying, which could be the result of imposing the underlying economic theory on multiple levels of treatment and instrument.


To facilitate the exposition, we define a random variable $ Y^*_{t,k}$ such that the marginal distribution of $Y^*_{t,k}$ is equal to the conditional distribution of $Y_t $ given $S \in \Sigma_{t,k}$. The joint distribution of the $Y^*_{t,k}$'s is irrelevant and hence left unspecified. For convenience, we use a single index $j \in J$ rather than $(t,k)$ for labeling. That is, we collect the $Y^*_{t,k}$'s into the vector $Y^* \equiv (Y^*_1,\cdots,Y^*_J)$. Let $t_j$ be the treatment level associated with $Y^*_j$. The quantities $p_j $ and $b_j$ are analogously defined.\footnote{We can further extend the vector $Y^*$ to include variables whose marginal distributions are the same as the conditional distributions of $Y_t$ given $T=t, S \in \Sigma_{t,k}$. Efficient estimation in this more general case is similar and hence omitted for brevity.}

Let the parameter of interest be $\eta$, which lies in the parameter space $\Lambda \subset \mathbb{R}^{d_\eta}$, $d_\eta \leq J$. The true value of the parameter $\eta_0$ satisfies the moment condition
\begin{equation*}
	\mathbb{E} \left[ m(Y^*,\eta^o) \right] = 0,
\end{equation*}
where $m: \mathcal{Y}^{J} \times \mathbb{R}^{d_\eta} \rightarrow \mathbb{R}^{J}$ is a vector of functions:
\begin{align*}
	m(Y^*,\eta) \equiv \left(m_1(Y_1^*,\eta), \cdots, m_{J}(Y_{J}^*,\eta)  \right)'
\end{align*}
Since the vector $\eta$ appears in each $m_j$, restrictions are allowed both within and across different subpopulations. Another interesting feature of this specification is that the moment conditions are defined for the random variables that are not observed. But their marginal distributions can be identified similar to Theorem \ref{thm:id1}. 

Let $\bar{m} \equiv (\bar{m}_{1}', \cdots, \bar{m}_{J}')'$, where 
\begin{align*}
	\bar{m}_{j}(X,\eta) = \left(\bar{m}_{j,z_1}(X,\eta), \cdots,\bar{m}_{j,z_{N_Z}}(X,\eta)  \right)' 
\end{align*}
and 
\begin{align*}
	\bar{m}_{j,z}(X,\eta) = \mathbb{E}\left[ m_j(Y,\eta) \mathbf{1}\{T = t_j\} \mid Z=z, X \right].
\end{align*}
The functions $\bar{m}_{j,z}$ are identified from the data.
Similar to Theorem \ref{thm:id1}, we can show that the parameter $\eta$ is identified by the moment conditions:
\begin{align*}
	b_{j} \mathbb{E} \left[ \bar{m}_{j}(X,\eta) \right] = 0, 1 \leq j \leq J \iff \eta = \eta^o.
\end{align*}
The following theorem gives the SPEB for the estimation of $\eta$. 

\begin{theorem} \label{thm:speb2}
	Assume the following conditions hold.
	\begin{enumerate}[label = (\roman*)]
		\item $\mathbb{E} \left[ m(Y^*,\eta)^2 \right] < \infty, \eta \in \Lambda $.
		\item \label{gmm_con_diff} For each $j$ and $z$,  $m_{j,t_j,z}$ is continuously differentiable in its second argument. Let $ \Gamma$ be the $J \times d_\eta$ matrix whose $j$th row is $ b_{j} \frac{d}{d \eta} \mathbb{E} \left[ \bar{m}_{j}(X,\eta) \right] \big|_{\eta = \eta^o}'  $, and assume $\Gamma$ has full column rank.
	\end{enumerate}
	Then for the estimation of $\eta$, the EIF is	
	\begin{equation} \label{gmmeif}
		- \left( \Gamma' V^{-1} \Gamma \right)^{-1} \Gamma' V^{-1}  \psi^\eta(Y,T,Z,X,\eta^o,\pi^o,\bar{m}^o),
	\end{equation}
	where 
	\begin{align*}
		V = \mathbb{E}\left[  \psi^\eta(Y,T,Z,X,\eta,\pi,\bar{m}) \psi^\eta(Y,T,Z,X,\eta,\pi,\bar{m})' \right] 
	\end{align*}
	and $\psi^\eta(Y,T,Z,X,\eta,\pi,\bar{m})$ is a $J \times 1$ random vector whose $j$th element is
	\begin{equation} \label{Psi_m}
		b_{j} \left( \zeta(Z,X,\pi)  \left( \iota (m_j(Y,\eta)\mathbf{1}\{T=t_j\}) - \bar{m}_{j}(X,\eta) \right) + \bar{m}_{j}(X,\eta) \right)
	\end{equation}
	In particular, the semiparametric efficiency bound is $\left( \Gamma' V^{-1} \Gamma \right)^{-1}$.
\end{theorem}

\begin{proof} [Proof of Theorem \ref{thm:speb2}]
	The proof is based on the approach described in section 3.6 of \cite{hong2010semiparametric} and the proof of Theorem 1 in \cite{cattaneo2010efficient}. We use a constant $d_\eta \times d_m$ matrix $A$ to transform the overidentified vector of moments into an exactly identified system of equations $A \left( b_{j} \mathbb{E} \left[ \bar{m}_{j}(X,\eta) \right] \right)_{j=1}^J  = 0$, find the $A$-dependent EIF for the exactly-identified parameter, and choose the optimal $A$. In a parametric submodel, the implicit function theorem gives that
	\begin{align*}
		\frac{ \partial }{ \partial \theta } \eta \big|_{\theta = \theta^o} = - \left(A \Gamma \right)^{-1} A \frac{ \partial }{ \partial \theta } \left( b_{j} \mathbb{E}_\theta \left[ \bar{m}_{j}(X,\eta^o) \right] \right)_{j=1}^J\big|_{\theta = \theta^o},
	\end{align*}
	where $\frac{ \partial }{ \partial \theta } \mathbb{E}_\theta \left[ \bar{m}_{j}(X,\eta^o) \right] \big|_{\theta = \theta^o}$ is an $N_Z \times 1$ random vector whose typical element can be represented by
	\begin{align*}
		& \int m_j(y,\eta^o) \mathbf{1}\{\tau = t_j\} s_{z}(y,\tau \mid x; \theta^o) f_{z}(y,\tau \mid x ; \theta^o) f_X(x ; \theta^o) dyd\tau dx \\
		+ &  \int m_j(y,\eta^o) \mathbf{1}\{\tau = t_j\} s_X( x; \theta^o) f_{z}(y,\tau \mid x ; \theta^o) f_X(x ; \theta^o) dyd\tau dx,
	\end{align*}
	for $z \in \mathcal{Z}$.
	So the EIF for this exactly-identified parameter is 
	\begin{align*}
		\psi^A(Y,T,Z,X,\eta^o,\pi^o,\bar{m}^o) = - \left(A \Gamma \right)^{-1} A \Psi^\eta(Y,T,Z,X,\eta^o,\pi^o,\bar{m}^o),
	\end{align*}
	where $\psi^\eta$ is defined by Equation (\ref{Psi_m}). It is straightforward to verify that $\psi^A$ satisfies $\frac{\partial }{ \partial \theta} \eta \big|_{\theta = \theta^o} = \mathbb{E} \left[ \psi^A s_{\theta^o}' \right] \text{, and } \psi^A \in \mathscr{S}$. The optimal $A$ is chosen by minimizing the sandwich matrix $\mathbb{E} \left[ \psi^A (\psi^A)' \right] =  \left(A \Gamma\right)^{-1} A \mathbb{E} \left[ \psi^\eta (\psi^\eta)' \right] A' \left(\Gamma' A' \right)^{-1}  $.
	Thus, the EIF for the over-identified parameter is obtained when $A = \Gamma' V^{-1}$. Plugging this expression into $\psi^A$, we obtain Equation (\ref{gmmeif}).
\end{proof}

Note that, for example, $m_j(Y^*_{j},\eta) = Y^*_{j} - \eta$, then $\eta=\beta_{j}$, and the efficiency bound shown above reduces to the one computed in Theorem \ref{thm:speb1}. If $T=Z$, that is, the treatment satisfies the unconfounded, then the Theorem \ref{thm:speb2} reduces to Theorem 1 in \cite{cattaneo2010efficient}. 

For estimation, we use the EIFs to generate moment conditions and propose a three-step semiparametric GMM procedure. The criterion function is
\begin{equation}
	\Psi^{\eta}_{n}(\eta,\pi,m) = \frac{1}{n}\sum_{i=1}^n \psi^\eta(Y_i,T_i,Z_i,X_i, \eta,\pi,\bar{m}).
\end{equation}
Its probability limit is denoted as
\begin{equation}
	\Psi^\eta(\eta,\pi,m_{Z}) = \mathbb{E} \left[ \psi^\eta(Y,T,Z,X, \eta,\pi,\bar{m}) \right],
\end{equation}
where the expectation is taken with respect to the true parameters $(\pi^o,\bar{m}^o)$. The implementation procedure is as follows. Assume that we have nonparametric estimators $\hat{\pi}$ and $\hat{m}$ that consistently estimate $\pi^o$ and $\bar{m}^o$, respectively. We first find a consistent GMM estimator $\tilde{\eta}$ using the identity matrix as the weighting matrix, that is,
\begin{equation}
	\norm{\Psi^\eta_n(\tilde{\eta},\hat{\pi},\hat{m})}_2 \leq \inf_{\eta \in \Lambda} \norm{\Psi^\eta_n(\eta,\hat{\pi},\hat{m})}_2 + o_p(1).
\end{equation}
Next, we use this estimate to form a consistent estimator $\hat{V}$ of the covariance matrix $V$, where 
\begin{align*}
	\hat{V} = \frac{1}{n} \sum_{i=1}^n \psi^\eta(Y_i,T_i,Z_i,X_i, \tilde{\eta},\hat{\pi},\hat{m}) \psi^\eta(Y_i,T_i,Z_i,X_i, \tilde{\eta},\hat{\pi},\hat{m})'.
\end{align*}
Then we let $\hat{\eta}$ be the optimally-weighted GMM estimator:
\begin{align*}
	& \Psi^\eta_n(\hat{\eta},\hat{\pi},\hat{m}_Z) V_n(\tilde{\eta},\hat{\pi},\hat{m}_Z)^{-1} \Psi^\eta_n(\hat{\eta},\hat{\pi},\hat{m}_Z)' \\
	\leq & \inf_{\eta \in \Lambda} \Psi^\eta_n(\eta,\hat{\pi},\hat{m}_Z) V_n(\tilde{\eta},\hat{\pi},\hat{m}_Z)^{-1} \Psi^\eta_n(\eta,\hat{\pi},\hat{m}_Z)' + o_p \big( n^{-1/2} \big).
\end{align*}
To conduct inference, we estimate $\Gamma$ using the estimator $\hat{\Gamma}$ whose elements are defined as
\begin{align*}
	\hat{\Gamma}_{jl} =  \frac{1}{n} \sum_{i=1}^n b_{j} \frac{\partial}{\partial \eta}\hat{m}_{j}(X_i,\eta) \Big|_{\eta = \hat{\eta}},
\end{align*}
where we have implicitly assumed that the estimator $\hat{m}_{j}$ is differentiable in its second argument.

In the following theorem, we derive the asymptotic properties of the GMM estimators. The main theoretical difficulty is that the random criterion function $\Psi_n(\cdot,\hat{\pi},\hat{m})$ could potentially be discontinuous because we allow $m(Y^*,\cdot)$ to be discontinuous. We use the theory developed in \cite{chen2003estimation} to overcome this problem.\footnote{\cite{cattaneo2010efficient} instead uses the theory from \cite{pakes1989simulation}. However, the general theory of \cite{chen2003estimation} is more straightforward to apply in this case since they explicitly assume the presence of infinite-dimensional nuisance parameters, which can depend on the parameters to be estimated.} Let $\Pi_z$ be the function class that contains $\pi_z^o$. Let $\mathcal{M}_{j,z}$ be the function class that contains $\bar{m}_{j,z}^o$.

\begin{theorem} \label{thm:est2}
	Let the assumptions in Theorem \ref{thm:speb2} hold. Assume the following conditions hold.
	\begin{enumerate}[ label = (\roman*)]
		\item The parameter space $\Lambda$ is compact. The true parameter $\eta^o$ is in the interior of $\Lambda$.
		\item For any $j,z$ and $\bar{m}_{j,z} \in \mathcal{M}_{j,z}$, there exists $C>0$ such that for $\delta>0$ sufficiently small,
		\begin{align*}
			\sup_{\abs{\eta' -\eta} \leq \delta}\mathbb{E}\abs{\bar{m}_{j,z}(X,\eta') - \bar{m}_{j,z}(X,\eta)}^2 \leq C \delta^2.
		\end{align*}
		\item Donsker properties:
		\begin{align*}
			\int_0^\infty \log N(\varepsilon,\Pi_z,\norm{\cdot}_\infty) d\varepsilon, \int_0^\infty \log N(\varepsilon,\mathcal{M}_{j,z},\norm{\cdot}_\infty) d\varepsilon < \infty,
		\end{align*}
		where $N(\varepsilon,\mathcal{F},\norm{\cdot})$ denotes the covering number of the space $(\mathcal{F},\norm{\cdot})$.
		\item Convergence rates of the nonparametric estimators: 
		\begin{align*}
			\norm{\hat{\pi}_z - \pi^o_z}_\infty ,\lVert \hat{m}_{j,z} - \bar{m}_{j,z}^o \rVert_\infty = o_p(n^{-1/4}).
		\end{align*}
		\item The function $\sup_{\eta \in \Lambda} \abs{ \frac{\partial}{\partial \eta} \bar{m}^o_{j}(\cdot,\eta) }$ is integrable. The estimator $\frac{\partial}{\partial \eta}\hat{m}_{j}$ is consistent uniformly in its second argument, that is,
		\begin{align*}
			\norm{\frac{\partial}{\partial \eta}\hat{m}_{j}(x,\eta) - \frac{\partial}{\partial \eta}\bar{m}^o_{j}(x,\eta)}_\infty = o_p(1), \forall x.
		\end{align*}
	\end{enumerate}
	Then $\tilde{\eta}=\eta^o + o_p(1)$, $\hat{V} = V + o_p(1)$, $\hat{\Gamma} = \Gamma + o_p(1)$, and
	\begin{align*}
		\sqrt{n}\left( \hat{\eta} - \eta^o \right) \implies N\left( \bm{0}, (\Gamma'V^{-1}\Gamma)^{-1} \right),
	\end{align*}
	where $\bm{0}$ denotes a vector of zeros.
\end{theorem}

The following lemma is helpful for proving Theorem \ref{thm:est2}.

\begin{lemma} \label{lm:se-clk}
	Under the assumptions of Theorem \ref{thm:speb2}, the class 
	\begin{align*}
		\mathcal{F} \equiv \left\{ \psi^\eta(Y,T,Z,X,\eta,\pi,\bar{m}): \pi \in \Pi_z, \bar{m}_{j,z} \in \mathcal{M}_{j,z}, 1 \leq j \leq J, z \in \mathcal{Z} \right\}
	\end{align*}
	is Donsker with a finite integrable envelope. The following stochastic equicontinuity condition hold: for any positive sequence $\delta_n = o(1)$,
	\begin{align*}
		\sup \big\{ &\Psi^\eta_n(\eta,\pi,\bar{m}) - \Psi^\eta(\eta,\pi,\bar{m}) - \Psi^\eta_n(\eta^o,\pi^o,m_Z^o): \\
		&\norm{\eta - \eta^o}_2 \vee \norm{\pi - \pi^o}_\infty \vee \norm{\bar{m} - \bar{m}^o}_\infty \leq \delta_n \big\} = o_p \big( n^{-1/2} \big),
	\end{align*}
	where the supremum is taken over $\eta \in \Lambda$, $\pi_z \in \Pi_z$, and $\bar{m}_{j,z} \in \mathcal{M}_{j,z}$.
\end{lemma}

\begin{proof} [Proof of Lemma \ref{lm:se-clk}]
	We first verify that the moment condition $\psi^\eta$ satisfies Condition (3.2) of Theorem 3 in \cite{chen2003estimation} (hereafter CLK). In fact, when $\lVert \bar{m}'_{j,z} - \bar{m}_{j,z} \rVert_\infty \vee \norm{\eta' - \eta}_\infty \leq \delta$, the triangle inequality gives that
	\begin{align*}
		& \mathbb{E}\abs{\bar{m}'_{j,z}(X,\eta') - \bar{m}_{j,z}(X,\eta)}^2 \\
		\leq & 2\mathbb{E}\abs{\bar{m}'_{j,z}(X,\eta') - \bar{m}'_{j,z}(X,\eta)}^2 + 2\mathbb{E}\abs{\bar{m}'_{j,z}(X,\eta) - \bar{m}_{j,z}(X,\eta)}^2\\
		\leq & const \times \delta^2,
	\end{align*}
	where we use the notation $\textit{const}$ to denote a generic constant that may have different values at each appearance. The last inequality follows from the assumption (ii).
	Similarly, we can verify that the remaining terms in $\psi^\eta$ also satisfy the same condition. Therefore, $\psi^\eta$ is locally uniformly $L_2$-continuous, that is,
	\begin{align*}
		\mathbb{E} \big[ \sup \big\{ & \abs{\psi^\eta(Y,T,Z,X,\eta',\pi',\bar{m}') - \psi^\eta(Y,T,Z,X,\eta,\pi,\bar{m})}: \\
		&\norm{\eta' - \eta} \vee \norm{\pi' - \pi}_\infty \vee \norm{\bar{m}' - \bar{m}}_\infty \leq \delta \big\} \big] \leq const. \times \delta^2.
	\end{align*}
	Following the same steps as in the proof of Theorem 3 in CLK (p. 1607), we can show that the bracketing number of $\mathcal{F}$ is bounded by
	\begin{align*}
		&N_{[]}\big(\varepsilon,\mathcal{F},\norm{\cdot}_{L_2}\big) \\
		\leq& N(\varepsilon/const,\Lambda,\norm{\cdot}) \times \prod_{z} N(\varepsilon/const,\Pi_z,\norm{\cdot}) \times \prod_{j,z} N(\varepsilon/const,\mathcal{M}_{j,z},\norm{\cdot}).
	\end{align*}
	Therefore, the bracketing entropy of class $\mathcal{F}$ is bounded by
	\begin{align*}
		&\log N_{[]}\big(\varepsilon,\mathcal{F},\norm{\cdot}_{L_2}\big) \\
		\leq& const \times \Big( \log N(\varepsilon/const,\Lambda,\norm{\cdot}) \vee \max_z \log N(\varepsilon/const,\Pi_z,\norm{\cdot}) \\
		\vee & \max_{j,z} \log N(\varepsilon/const,\mathcal{M}_{j,z},\norm{\cdot})  \Big) .
	\end{align*}
	Under the assumption that $\Lambda$ is compact and 
	\begin{align*}
		\int_0^\infty \log N(\varepsilon,\Pi_z,\norm{\cdot}) d\varepsilon, \int_0^\infty \log N(\varepsilon,\mathcal{M}_{j,z},\norm{\cdot}) d\varepsilon < \infty, \forall j,z,
	\end{align*}
	we have that
	\begin{align*}
		\int_0^\infty \log N_{[]}\big(\varepsilon,\mathcal{F},\norm{\cdot}_{L_2}\big) d\varepsilon < \infty.
	\end{align*}
	This implies that $\mathcal{F}$ is Donsker with a finite integrable envelope. Lastly, as stated in Lemma 1 of CLK, the asserted stochastic equicontinuity condition is implied by the fact that $\mathcal{F}$ is Donsker and $\psi^\eta$ is $L_2$-continuous.
\end{proof}

\begin{proof} [Proof of Theorem \ref{thm:est2}]
	We follow the large sample theory in CLK and set $\theta = \eta$, $h = (\pi,\bar{m})$, $M(\theta,h) = \Psi^\eta(\eta,\pi,\bar{m})$, and $M_n(\theta,h) = \Psi^\eta_n(\eta,\pi,\bar{m})$.

	We first use Theorem 1 in CLK to show the consistency of $\tilde{\eta}$. Condition (1.2) in CLK is satisfied because $\Lambda$ is compact, and $\Psi^\eta(\eta,\pi^o,\bar{m}^o)$ has a unique zero and is continuous by our second condition in Theorem \ref{thm:speb2}. As for Condition (1.3) of CLK, we can easily see from the expression of $\Psi$ that it is continuous with respect to $\bar{m}_{j,z}$ and $\pi_z$ (since $\pi_z$ is bounded away from zero), and the uniformity in $\eta$ follows from the fact that $\mathbb{E} \left[ m(Y^*,\eta) \right] $ is bounded as a function of $\eta$. Condition (1.4) of CLK is satisfied by the assumption of Theorem \ref{thm:est2}. The uniform stochastic equicontinuity condition (1.5) of CLK is implied by Lemma \ref{lm:se-clk}. Therefore, $\tilde{\eta} = \eta^o + o_p(1)$.

	We use Corollary 1 (which is based on Theorem 2) in CLK to show the consistency of $\hat{V}$ and the asymptotic normality of $\hat{\eta}$. Condition (2.2) in CLK is verified by the assumptions of Theorem \ref{thm:speb2}. Similar to the proof of Proposition \ref{prop:Neyman}, we can show that the moment condition $\Psi^\eta$, based on the EIF, satisfies the Neyman orthogonality condition for the nuisance parameters $\pi$ and $m_Z$. In fact, for any $j$ and $z$, we let $\pi_z^r = \pi^o_z(X) + r(\pi_z(X) - \pi^o_z(X))$ and $\bar{m}^r_{j,z}(X,\eta) = \bar{m}^o_{j,z}(X,\eta) + r \big( \bar{m}_{j,z}(X,\eta) - \bar{m}^o_{j,z}(X,\eta) \big)$. Then we have
	\begin{align*}
		&\frac{d}{d r} \mathbb{E} \left[ \frac{\mathbf{1}\{Z = z\}}{\pi^r_z(X)} \left( m_j(Y,\eta) \mathbf{1}\{T=t_j\} -  \bar{m}^r_{j,z}(X,\eta)  \right) + \bar{m}^r_{j,z}(X,\eta) \right] \Bigg|_{r=0} \\
		= \mathbb{E} \Bigg[ & -\frac{\mathbf{1}\{Z = z\}}{\left(  \pi^o_z(X) \right)^2} \left( \pi_z(X) - \pi_z^o(X) \right) \left( m_j(Y,\eta) \mathbf{1} \{T = t_j\} - \bar{m}^o_{j,z}(X,\eta) \right)  \\
		+ & \left( \bar{m}^o_{j,z}(X,\eta)  - \bar{m}_{j,z}(X,\eta) \right)   \left(  \frac{\mathbf{1}\{Z = z\}}{\pi^o_z(X)} - 1 \right) \Bigg] = 0,
	\end{align*}
	where we have applied the law of iterated expectations and used the fact that 
	$$\mathbb{E} \left[ \frac{\mathbf{1}\{Z = z\}}{\pi^o_z(X)} \left( m_j(Y,\eta) \mathbf{1} \{T = t_j\} - \bar{m}^o_{j,z}(X,\eta) \right) \Big| X \right] = 0. $$
	Thus, the path-wise derivative of $\Psi^\eta$ with respect to $h = (\pi,\bar{m})$ is zero in any direction. Hence, Condition (2.3) of CLK is verified. Condition (2.4) in CLK directly follows from our assumptions of Theorem \ref{thm:est2}. The stochastic equicontinuity condition (condition (2.6) in CLK) follows from Lemma \ref{lm:se-clk}. Lastly, condition (2.6) in CLK is verified using the central limit theorem since the path-wise derivative is zero. Due to the presence of $\hat{V}$, we also need the uniform convergence condition in Corollary 1 of CLK, which can be verified by using Lemma \ref{lm:se-clk} and an application of Theorem 2.10.14 of \cite{van1996weak}. 
	
	Lastly, to show the consistency of $\hat{\Gamma}$, we only need to show that 
	\begin{align*}
		\frac{1}{n} \sum_{i=1}^n \frac{\partial}{\partial \eta} \hat{m}_{j,t_j,z}(X_i,\hat{\eta}) \overset{p}{\rightarrow} \mathbb{E} \left[ \frac{\partial}{\partial \eta} \hat{m}_{j,z}(X,\eta^o) \right] = \frac{\partial}{\partial \eta} \mathbb{E} \left[  \hat{m}_{j,z}(X,\eta^o) \right],
	\end{align*}
	where the inequality follows from the differentiation under integral operation which holds under the last assumption of the theorem. The convergence in probability follows from the uniform convergence of $\frac{\partial}{\partial \eta} \hat{m}_{j,z}$ and the consistency of $\hat{\eta}$.
	Therefore, the desired convergence results follow.
\end{proof}

\bibliographystyle{chicago}

\bibliography{references.bib}
\end{document}